\newcommand{\cO}{\mathcal{O}}
\def\dd{\mathinner{.\,.}}
\def\dd{\mathinner{.\,.}}
\newcommand{\MM}{\textsc{Minimizing the Minimizers}\xspace}
\newcommand{\MMk}{\textsc{Minimizing the Minimizers} $(\leq\Sigma^k)$\xspace}
\newcommand{\MMONE}{\textsc{Minimizing the Minimizers} $(\leq\Sigma^1)$\xspace}
\newcommand{\MMD}{\textsc{Minimizing the Minimizers (Decision)}\xspace}
\newcommand{\FAS}{\textsc{Feedback Arc Set}\xspace}
\newcommand{\FASD}{\textsc{Feedback Arc Set (Decision)}\xspace}
\newcommand{\EFAS}{\textsc{Eulerian Feedback Arc Set}\xspace}
\newcommand{\EFASD}{\textsc{Eulerian Feedback Arc Set (Decision)}\xspace}
\newcommand{\ta}{\texttt{a}}
\newcommand{\tb}{\texttt{b}}
\newcommand{\Tab}{T_{\ta\tb}}
\newcommand{\Mab}{M_{\ta<\tb}}
\newcommand{\Mba}{M_{\tb<\ta}}
\newcommand{\Mwk}{\mathcal{M}_{w,k}}
\newcommand{\Mw}{\mathcal{M}_{w,1}}
\newcommand{\defproblem}[3]{
\vspace{2mm}
\noindent\fbox{
   \begin{minipage}{0.96\textwidth}
   \textsc{#1}\\
   {\bf{Input:}} #2  \\
   {\bf{Output:}} #3
   \end{minipage}
   }
   \vspace{2mm}
}
\newcommand{\DrawBracket}[4]{
	\draw  (#1, #3)
		-- (#1, #3 + 0.2em)
		-- node [anchor=south] {#4}
		   (#2, #3 + 0.2em)
		-- (#2, #3);
}
\newcommand{\DrawBracketUnder}[4]{
	\draw  (#1, #3)
		-- (#1, #3 - 0.2em)
		-- node [anchor=north] {#4}
		   (#2, #3 - 0.2em)
		-- (#2, #3);
}
\title{Minimizing the Minimizers via Alphabet Reordering} 
\author{Hilde Verbeek}{CWI, Amsterdam, The Netherlands}{hilde.verbeek@cwi.nl}{https://orcid.org/0000-0002-2399-3098}{Supported by a Constance van Eeden Fellowship.}
\author{Lorraine A.K. Ayad}{Brunel University London, London, UK}{lorraine.ayad@brunel.ac.uk}{https://orcid.org/0000-0003-0846-2616}{}
\author{Grigorios Loukides}{King's College London, London, UK}{grigorios.loukides@kcl.ac.uk}{ https://orcid.org/
0000-0003-0888-5061}{}
\author{Solon P. Pissis}{CWI, Amsterdam, The Netherlands \and Vrije Universiteit, Amsterdam, The Netherlands}{solon.pissis@cwi.nl}{https://orcid.org/0000-0002-1445-1932}{Supported by the PANGAIA and ALPACA projects that have received funding from the European Union’s Horizon 2020 research and innovation programme under the Marie Skłodowska-Curie grant agreements No 872539 and 956229, respectively.}
\authorrunning{H. Verbeek et al.} 
\keywords{sequence analysis, minimizers, alphabet reordering, feedback arc set} 
\begin{document}

\maketitle

\begin{abstract}
Minimizers sampling is one of the most widely-used mechanisms for sampling strings [Roberts et al., Bioinformatics 2004].
Let $S=S[1]\ldots S[n]$ be a string over a totally ordered alphabet $\Sigma$.
Further let $w\geq 2$ and $k\geq 1$ be two integers.
The minimizer of $S[i\dd i+w+k-2]$ is the smallest position in $[i,i+w-1]$ where the lexicographically
smallest length-$k$ substring of $S[i\dd i+w+k-2]$ starts.
The set of minimizers over all $i\in[1,n-w-k+2]$ is the set $\Mwk(S)$ of the minimizers of $S$.

We consider the following basic problem: 
\begin{center}
 \emph{Given $S$, $w$, and $k$, can we efficiently compute a total order on $\Sigma$ that minimizes $|\Mwk(S)|$?}  
\end{center}

\noindent We show that this is unlikely by proving that the problem is NP-hard \emph{for any $w\geq 2$ and $k\geq 1$}. Our result provides theoretical justification as to why \emph{there exist no exact algorithms} for minimizing the minimizers samples, while \emph{there exists a plethora of heuristics} for the same purpose. 
\end{abstract}

\section{Introduction}

The minimizers sampling mechanism has been introduced independently by Schleimer et al.~\cite{DBLP:conf/sigmod/SchleimerWA03} and by Roberts et al.~\cite{DBLP:journals/bioinformatics/RobertsHHMY04}.
Since its inception, it has been employed ubiquitously in modern sequence analysis methods underlying some of the most widely-used tools~\cite{DBLP:journals/bioinformatics/Li16a,DBLP:journals/bioinformatics/Li18,Kraken}. 

Let $S=S[1]\ldots S[n]$ be a string over a totally ordered alphabet $\Sigma$.
Further let $w\geq 2$ and $k\geq 1$ be two integers.
The minimizer of the fragment $S[i\dd i+w+k-2]$ of $S$ is the smallest position in $[i,i+w-1]$ where the lexicographically
smallest length-$k$ substring of $S[i\dd i+w+k-2]$ starts.
We then define the set $\Mwk(S)$ of the minimizers of $S$ as
the set of the minimizers positions over all fragments $S[i\dd i+w+k-2]$, for $i\in[1,n-w-k+2]$.
Every fragment $S[i\dd i+w+k-2]$ containing $w$ length-$k$ fragments is called a \emph{window} of $S$.

\begin{example}\label{ex:first}
Let $S = \texttt{aacaaacgcta}$, $w = 3$, and $k=3$.
Assuming $\texttt{a}<\texttt{c}<\texttt{g}<\texttt{t}$, we have that $\Mwk(S)=\{1, 4, 5, 6, 7\}$. The minimizers positions are colored red: $S = \texttt{\textcolor{red}{a}ac\textcolor{red}{a}\textcolor{red}{a}\textcolor{red}{a}\textcolor{red}{c}gcta}$.
\end{example}

Note that by choosing
the \emph{smallest} position in $[i,i+w-1]$ where the lexicographically
smallest length-$k$ substring starts,
we resolve ties in case the latter substring has multiple occurrences in a window.

It is easy to prove that minimizers samples enjoy the following three useful properties~\cite{DBLP:journals/jcb/ZhengMK23}:
\begin{itemize}
    \item \textbf{Property 1 (approximately uniform sampling)}: Every fragment of length at least $w+k-1$ of $S$ has at least one representative position sampled by the mechanism.
    \item \textbf{Property 2 (local consistency)}: Exact matches between fragments of length at least $\ell \geq w+k-1$ of $S$ are preserved by means of having the same (relative) representative positions sampled by the mechanism.
    \item \textbf{Property 3 (left-to-right parsing)}: The minimizer selected by any fragment of length $w+k-1$ comes at or after the minimizers positions selected by all previous windows.
\end{itemize}

Since Properties $1$ to $3$ hold \emph{unconditionally},
and since the ordering of letters does not affect the correctness of algorithms using minimizers samples~\cite{DBLP:journals/spe/GrabowskiR17,DBLP:journals/almob/ShibuyaBK22,DBLP:journals/tkde/LoukidesPS23,DBLP:journals/pvldb/AyadLP23}, 
one would like to choose the ordering that minimizes the resulting sample as a means to improve the space occupied by the underlying data structures; contrast Example~\ref{ex:first} to the following example.

\begin{example}
Let $S = \texttt{aacaaacgcta}$, $w = 3$, and $k=3$.
Assuming $\texttt{c}<\texttt{a}<\texttt{g}<\texttt{t}$, we have that $\Mwk(S)=\{3, 6, 7\}$. The minimizers positions are colored red: $S = \texttt{aa\textcolor{red}{c}aa\textcolor{red}{a}\textcolor{red}{c}gcta}$. In fact, this ordering is a best solution in minimizing $|\Mwk(S)|$, together with the orderings $\texttt{c}<\texttt{g}<\texttt{t}<\texttt{a}$ and $\texttt{c}<\texttt{g}<\texttt{a}<\texttt{t}$, which both, as well, result in $|\Mwk(S)|=3$.
\end{example}

\subparagraph{Our Problem.} We next formalize the problem of computing a best such total order on $\Sigma$:

\defproblem{\MM}{A string $S \in \Sigma^n$ and two integers $w\geq 2$ and $k\geq 1$.}{A total order on $\Sigma$ that minimizes $|\Mwk(S)|$.}

\subparagraph{Motivation.} A lot of effort has been devoted by the bioinformatics community to designing practical algorithms for minimizing the resulting minimizers sample~\cite{DBLP:journals/jcb/ChikhiLJSM15,DBLP:journals/bioinformatics/DeorowiczKGD15,DBLP:conf/wabi/OrensteinPMSK16,DBLP:journals/bioinformatics/ZhengKM20,DBLP:journals/bioinformatics/JainRZCWKP20,DBLP:journals/bioinformatics/ZhengKM21,DBLP:journals/jcb/HoangZK22}. Most of these approaches  consider the space of all orderings on $\Sigma^k$ (the set of all possible length-$k$ strings on $\Sigma$) instead of the ones on $\Sigma$; and employ \emph{heuristics} to choose some ordering resulting in a small sample (see Section~\ref{sec:MMk} for a discussion).
To illustrate the impact of reordering on the number of minimizers, we considered two real-world datasets and measured the difference in the number of minimizers between the worst and best reordering, among those we could consider in a reasonable amount of time. The first dataset we considered is the complete genome of Escherichia coli str.~K-12 substr.~MG1655.
For selecting minimizers, we considered different orderings on $\Sigma^k$.
We thus mapped every length-$k$ substring to its lexicographic rank in $\{\texttt{A,C,G,T}\}^k$ (assuming $\texttt{A} < \texttt{C} < \texttt{G} < \texttt{T})$ 
constructing a new string $S$ over $[1,|\Sigma|^k]$.
We then computed $|\Mw(S)|$ for different values of $(w,k)$ and orderings on $[1,|\Sigma|^k]$. It should be clear that this corresponds to computing the size of $\Mwk$ for the original sequence over $\{\texttt{A,C,G,T}\}$. 
The second dataset is the complete genome of SARS-CoV-2 OL663976.1.
\cref{fig:Mws} shows the $\min$ and $\max$ values of the size of the obtained minimizers samples. The results in \cref{fig:Mws} clearly show the impact of alphabet reordering on $|\Mw(S)|$: the gap between the $\min$ and $\max$ is quite significant as in all cases we have $2\min< \max$.
Note that we had to terminate the exploration of the whole space of orderings 
when $2\min< \max$ was achieved; 
hence the presented gaps are not even the largest possible. 

\begin{figure}[ht]
    \centering
        \subfloat[][Complete genome of Escherichia coli.\label{fig:ECOLI}]
        {\includegraphics[width=0.49\textwidth]{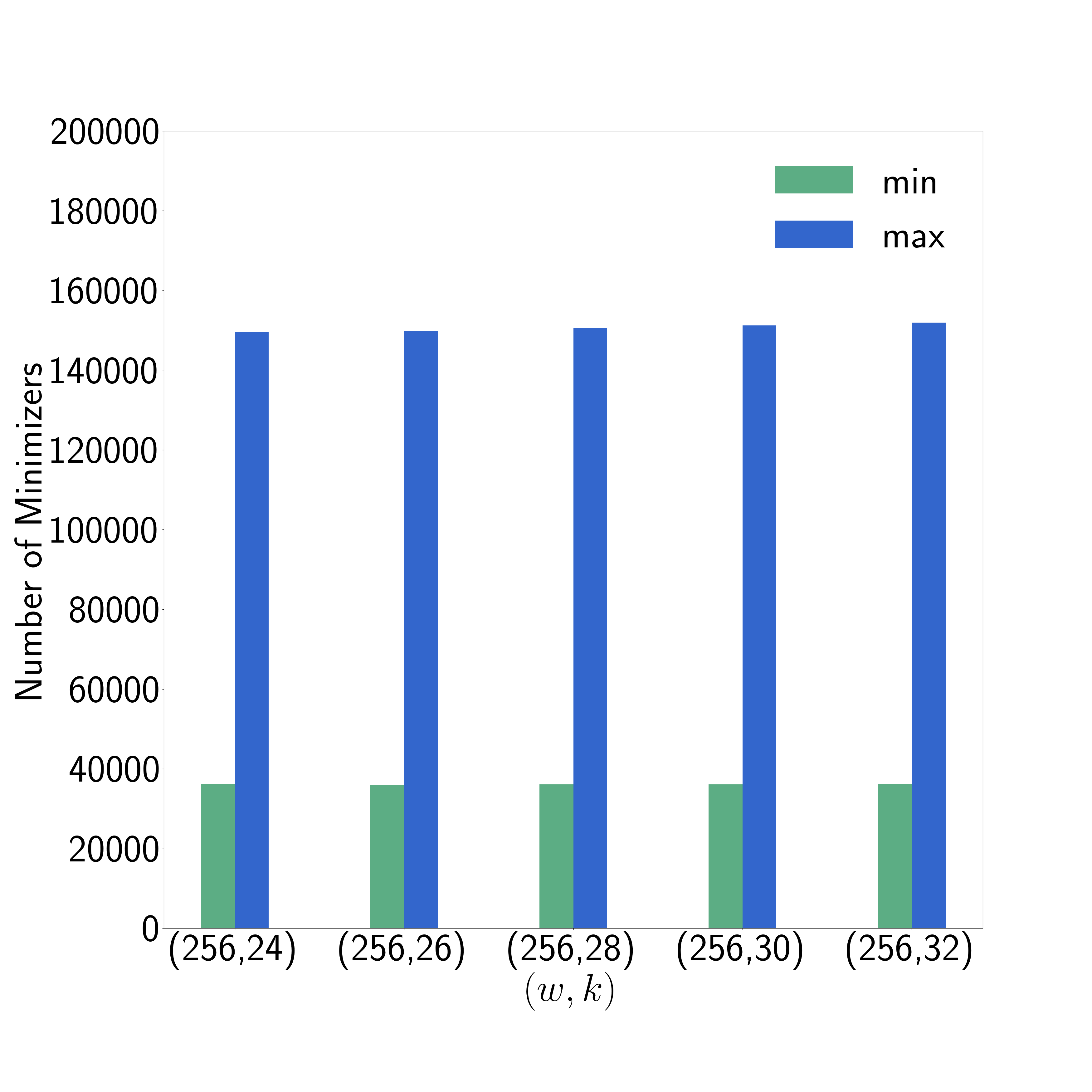}}\hspace{1mm}
        \subfloat[][Complete genome of SARS-CoV-2.\label{fig:SARS-CoV-2}]
        {\includegraphics[width=0.49\textwidth]{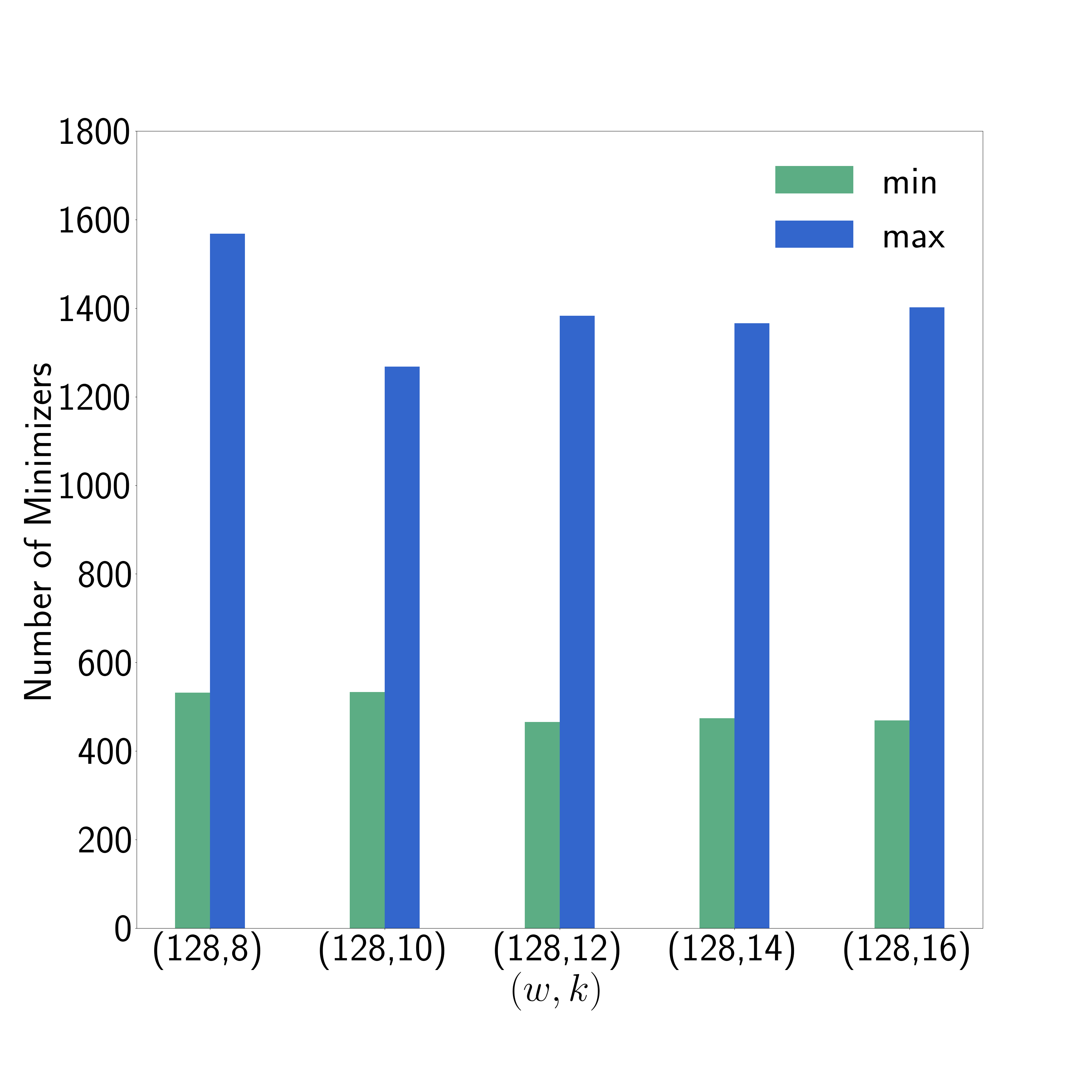}}
    \caption{The $\min$ and $\max$ values of the size of the minimizers sample, among \emph{some} of the possible orderings of $[1,|\Sigma|^k]$, on two real datasets using a range of $(w,k)$ parameter values.}
    \label{fig:Mws}
\end{figure}

This begs the question:

\begin{center}
 \emph{Given $S$, $w$, and $k$, can we efficiently compute a total order on $\Sigma$ that minimizes $|\Mwk(S)|$?}  
\end{center}

\subparagraph{Our Contribution.} We answer this basic question in the negative. Let us first define the decision version of \MM.

\defproblem{\MMD}{A string $S \in \Sigma^n$ and three integers $w\geq 2$, $k\geq 1$, and $\ell>0$.}{Is there a total order on $\Sigma$ such that $|\Mwk(S)|\leq \ell$?}

Our main contributions are the following two theorems implying \cref{cor:main}.
\begin{theorem}\label{the:main}
    \MMD is NP-complete if $w \ge 3$ and $k \ge 1$.
\end{theorem}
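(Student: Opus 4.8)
The plan is to reduce from a known NP-hard problem about orderings, namely \FAS (Feedback Arc Set) — or more precisely its restriction to Eulerian digraphs, \EFAS, which will make the gadget bookkeeping cleaner. Given a digraph $G$, a linear order on its vertices $V$ induces a set of \emph{backward arcs} (arcs $u\to v$ with $v$ before $u$), and we want to know whether some order leaves at most $\ell$ backward arcs. The idea is to build, from $G$, a string $S$ over an alphabet $\Sigma$ that contains (a copy of) $V$ together with a handful of auxiliary ``separator'' letters, and parameters $w,k$, such that a total order on $\Sigma$ corresponds to a vertex order of $G$ (the separator letters being forced into fixed relative positions), and such that $|\Mwk(S)|$ equals a fixed baseline plus exactly the number of backward arcs of the corresponding vertex order. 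Then $|\Mwk(S)|\le \ell'$ for the appropriate $\ell'$ iff $G$ has a feedback arc set of size $\le \ell$.

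The heart of the construction is an \emph{arc gadget}: for each arc $u\to v$ of $G$ I would splice into $S$ a short fragment engineered so that, inside the windows overlapping it, the competition for ``smallest length-$k$ substring'' is effectively between the single letters $u$ and $v$ (everything else in those windows being made uniformly large by padding with a top separator symbol, or uniformly controlled so only $u,v$ can win). If $u<v$ in the chosen order the gadget contributes, say, one new minimizer position; if $v<u$ it contributes two (or vice versa) — the key being that the \emph{difference} is one per backward arc. Because $w\ge 3$ (which is exactly the regime of this theorem), each window sees enough room to place one occurrence of $u$, one of $v$, and the necessary separators so that the tie-break rule (smallest \emph{position}) and Property 3 (left-to-right parsing) behave predictably and do not create spurious minimizers or cancel real ones. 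I would first do the $k=1$ case in full, where ``smallest length-$k$ substring'' is just ``smallest letter'', and then note that for $k\ge 2$ one pads each conceptual letter to a length-$k$ block (e.g.\ $u \mapsto u\cdot\#^{k-1}$ with $\#$ the minimum separator) so that comparisons of length-$k$ substrings reduce to comparisons of the leading symbols; this is the standard blow-up and should go through with only notational overhead.

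The steps, in order: (1) fix the source problem as \EFASD and recall its NP-completeness on Eulerian digraphs; (2) describe the alphabet $\Sigma = V \cup \{\text{separators}\}$ and argue that in any order achieving a small sample the separators must sit at prescribed positions (enforce this by repeating separator-only stretches whose minimizer count is order-independent, or by adding cheap gadgets that penalize any other placement); (3) build the per-arc gadget and prove the local claim ``gadget for $u\to v$ contributes $c$ minimizers if $u<v$ and $c+1$ if $v<u$, independent of the rest of the order'' — here the Eulerian/degree structure lets me lay the gadgets out consecutively without one gadget's window leaking misleadingly into the next; (4) assemble $S$ as the concatenation of all arc gadgets (plus a fixed prefix/suffix of separators so boundary windows are well-defined), compute the order-independent baseline $B$, and set $\ell' = B + \ell$; (5) conclude that $G$ has a feedback arc set of size $\le\ell$ iff the instance $(S,w,k,\ell')$ is a yes-instance, and check that $S$, $w$, $k$, $\ell'$ are polynomial in $|G|$ and computable in polynomial time; membership in NP is immediate since a candidate order can be checked by directly computing $\Mwk(S)$ in polynomial time.

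The main obstacle is step (3) — making the arc gadget \emph{local}, i.e.\ proving that its minimizer contribution depends only on the relative order of $u$ and $v$ and not on where $u$ and $v$ (and all other vertex-letters) sit globally, and simultaneously ensuring adjacent gadgets do not interfere. The danger is twofold: a window straddling two gadgets might let a letter from gadget $i{+}1$ undercut the intended winner in gadget $i$, and the left-to-right parsing property (Property 3) could suppress a minimizer we are counting on. I expect to handle both by inserting, between consecutive gadgets, a short buffer of the \emph{maximum} separator symbol (so it never wins any competition) that is nonetheless long enough — at least $w-1$ symbols, using $w\ge 3$ — to decouple the windows, and by choosing the internal layout of each gadget so that the minimizer it ``owns'' is forced by the position tie-break regardless of history. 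A secondary subtlety is showing the separators are genuinely pinned in any near-optimal order (step (2)); if a clean pinning argument proves elusive, the fallback is to make each separator a distinct fresh symbol used so heavily that misordering it inflates the sample past $\ell'$ outright, decoupling the separator order from the vertex order by brute force.
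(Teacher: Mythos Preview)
Your high-level plan---reduce from a feedback-arc-set problem, one gadget per arc, a difference of one minimizer per backward arc---matches the paper, but the implementation diverges in an important way and your version has a real gap.

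The paper uses \emph{no separator letters at all}: the alphabet is exactly $\Sigma=V$, and the gadget $\Tab$ for arc $(\ta,\tb)$ is built only from the two letters $\ta$ and $\tb$ (for instance $\Tab=\ta\tb^{w-1}$ when $w\ge k+2$, and $\Tab=(\ta\tb)^t\tb\tb$ with $t=\lceil(w+k)/2\rceil$ otherwise). Interference between consecutive gadgets is \emph{not} eliminated; it is merely bounded by a discrepancy $\lambda\le 4|A|\cdot|\Tab|$ and then drowned out by repeating each $\Tab$ a total of $q+4$ times with $q$ chosen so that $\lambda<q\cdot(\Mba-\Mab)$. Thus the count is not $B+|F|$ exactly but $q(\Mba-\Mab)|F|+q\Mab|A|+\lambda$, with $\lambda$ too small to move the answer across the threshold. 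This completely sidesteps your step~(2), and the reduction is from ordinary \FASD rather than \EFASD (the Eulerian version is used only for the separate $w=2$ theorem).

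Your separator-based route runs straight into the pinning problem you yourself flag, and neither proposed fix is solid. If a ``maximum'' separator is instead placed low in the order it becomes the winner wherever it appears, and a long run of a single small letter produces \emph{very few} minimizers; so heavy repetition of a separator does not obviously penalize misordering---it can lower the count. The buffer length ``at least $w-1$'' is also too short: windows span $w+k-1$ positions, so $w-1$ does not decouple adjacent gadgets once $k\ge 2$. Finally, the $k\ge 2$ blow-up $u\mapsto u\cdot\#^{k-1}$ with $\#$ minimum does not make length-$k$ comparisons reduce to the \emph{leading} original symbol: in $\dots u\,\#^{k-1}\,v\,\#^{k-1}\dots$ the smallest length-$k$ factor is $\#^{k-1}v$, which is governed by the \emph{next} letter $v$, not by $u$. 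None of these is obviously fatal, but together they show the separator route needs substantially more work than you suggest, whereas the paper's two-letter gadgets plus repetition-and-discrepancy trick make all of these issues disappear.
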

\begin{theorem}\label{the:main-w2}
    \MMD is NP-complete if $w = 2$ and $k \ge 1$.
\end{theorem}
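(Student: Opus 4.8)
The plan is to reduce from \FAS, which is NP-complete. Recall that the minimum feedback arc set size of a digraph $G$ equals the minimum, over all linear orders of $V(G)$, of the number of arcs pointing from a later vertex to an earlier one; reversing the order interchanges such ``back'' arcs with ``forward'' arcs, so it also equals the minimum number of forward arcs over all linear orders. We also note that \MMD is in NP, since a total order on $\Sigma$ is a polynomial-size certificate and $|\Mwk(S)|$ can be evaluated in polynomial time.

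Take $k=1$ first. Given $G=(V,E)$, assumed without self-loops, let $\Sigma=V\cup\{\bot\}$ with $\bot$ a fresh letter that we declare smaller than every vertex-letter; the order among the vertex-letters is what we optimise over. For each arc $(u,v)\in E$ use the length-$4$ gadget $\bot\,u\,v\,u\,\bot$, and let $S$ be the concatenation of the $|E|$ gadgets with consecutive occurrences of $\bot$ identified, i.e. $S=\bot\,u_1v_1u_1\,\bot\,u_2v_2u_2\,\bot\cdots\bot$. For $w=2$ and $k=1$, a position $j$ fails to be a minimizer precisely when $S[j-1]\le S[j]>S[j+1]$ (it loses window $j-1$ to position $j-1$ and window $j$ to position $j+1$). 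Inspecting a single gadget $\bot\,u\,v\,u\,\bot$ with this rule shows: every occurrence of $\bot$ is a minimizer, and inside the gadget exactly one non-minimizer arises when $u<v$ (the middle $v$) while exactly two arise when $v<u$ (both copies of $u$). Summing, $|\Mw(S)|=2|E|+1+f$, where $f$ is the number of arcs $(u,v)$ with $u<v$; hence $\min_{\mathrm{order}}|\Mw(S)|=2|E|+1+\mathrm{FAS}(G)$, and there is an order with $|\Mw(S)|\le 2|E|+1+t$ iff $\mathrm{FAS}(G)\le t$. This proves the case $w=2$, $k=1$.

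For a fixed $k\ge 2$ I would keep the same $S$ and exploit that it has \emph{no two equal consecutive letters}. Then, in any window $S[i\dd i+k]$, the two competing length-$k$ substrings $S[i\dd i+k-1]$ and $S[i+1\dd i+k]$ already differ in their first position, so the winner---and hence the window's minimizer---is determined by $S[i]$ versus $S[i+1]$ exactly as in the $k=1$ rule above. Thus $\Mwk(S)$ equals $\Mw(S)$ except on the last $k-1$ positions, whose windows cease to exist when $k>1$. Appending to $S$ a padding block of length $\Theta(k)$ over two fresh letters larger than all vertex-letters, arranged so that it has no equal consecutive letters, makes those trailing positions inert and shifts $|\Mwk(S)|$ by an order-independent constant $c(k)$; taking the threshold $\ell=2|E|+1+c(k)+t$ then yields NP-hardness for every fixed $k\ge 1$, completing the proof of \cref{the:main-w2}.

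The step I expect to be the main obstacle is the design of an \emph{asymmetric} gadget. When $w=2$ each window merely compares two adjacent length-$k$ substrings, and for two distinct adjacent letters exactly one of the two possible relative orders is the favourable one---a perfectly symmetric situation that, used directly, only encodes polynomial-time-decidable instances (essentially: is $G$ acyclic?). The three-letter core $u\,v\,u$ is what breaks the symmetry, contributing one versus two non-minimizers according to the order of $u$ and $v$; one then has to check carefully that the shared $\bot$-separators transmit no order-dependent information between gadgets and that both ends of $S$ (and, for $k\ge2$, the padding) behave as claimed. This bookkeeping, rather than any deep combinatorial argument, is the delicate part. (If one prefers to reuse the machinery of \cref{the:main}, the very same construction works starting from \EFAS instead of \FAS, as it uses no structural property of $G$.)
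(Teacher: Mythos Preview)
Your reduction has a genuine gap: you introduce a fresh separator letter $\bot$ and \emph{declare} it to be smaller than every vertex-letter, but \MMD asks for the existence of \emph{any} total order on the whole alphabet $\Sigma=V\cup\{\bot\}$. Your counting formula $|\Mw(S)|=2|E|+1+f$ is only valid under the assumption that $\bot$ is the smallest letter, and an adversary solving \MMD is free to place $\bot$ elsewhere.

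This is not a harmless technicality; it breaks the reduction. Take the $2$-cycle $G=(\{u,v\},\{(u,v),(v,u)\})$ with $\mathrm{FAS}(G)=1$. Your string is $S=\bot u v u\,\bot\, v u v\,\bot$, and your formula predicts a minimum of $2\cdot 2+1+1=6$ minimizers. But under the order $u<\bot<v$ the windows $[1,2],\dots,[8,9]$ select positions $2,2,4,4,5,7,7,9$, so $\Mw(S)=\{2,4,5,7,9\}$ has size $5$. Hence $(S,\ell)$ with $\ell=5$ is a \textsf{YES} instance of \MMD while $(G,0)$ is a \textsf{NO} instance of \FASD, so the ``only if'' direction fails. (In fact every order that does \emph{not} put $\bot$ first gives $5$ minimizers here.) The same objection applies to your $k\ge 2$ extension, where you again ``declare'' the two padding letters to be larger than all vertex-letters.

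The paper sidesteps this issue entirely: it reduces from \EFASD and uses an Eulerian circuit $v_0,\dots,v_{|A|}$ of $G$ to build $S=\prod_i v_{|A|-i}^{\,k+1}$ over the alphabet $\Sigma=V$ itself, with no auxiliary letters. Consecutive arc-gadgets overlap at their shared vertex, so no separator is needed, and the exact count $|\Mwk(S)|=1+|A|\cdot k+|F|$ holds for \emph{every} total order on $V$. If you want to rescue your approach, you would need either to prove that placing $\bot$ first is always optimal (false, as shown), or to redesign the separator so that moving $\bot$ away from the bottom can never decrease the minimizer count---which is essentially the difficulty the paper's Eulerian construction is built to avoid.
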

\begin{corollary}\label{cor:main} \MMD is NP-complete for any $w$ and $k$.
\end{corollary}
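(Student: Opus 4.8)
This corollary is immediate from \cref{the:main} and \cref{the:main-w2}, which together cover every admissible pair $(w,k)$; so the plan is to prove those two theorems, of which membership in NP is the easy half. In both cases a total order on $\Sigma$ is a polynomial-size certificate, and given it one computes $\Mwk(S)$ — hence $|\Mwk(S)|$, to be compared with $\ell$ — in near-linear time by running a sliding-window minimum over the sequence of lexicographic ranks of the length-$k$ substrings of $S$. The substance is NP-hardness.

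For NP-hardness I would reduce from a feedback arc set problem, using the correspondence ``total order on $\Sigma$'' $\leftrightarrow$ ``linear arrangement of the vertices of a digraph $G=(V,A)$'' and aiming to make $|\Mwk(S)|$ an affine function of the number of back arcs. Concretely: take $\Sigma$ to be $V$ together with $O(1)$ auxiliary letters, and build $S$ by concatenating one short \emph{arc gadget} per arc $(u,v)\in A$, separated by padding blocks. Each arc gadget is engineered so that it contributes a fixed number of elements to $\Mwk(S)$ when $u$ precedes $v$, and exactly one extra element when $v$ precedes $u$; for $w=2$, $k=1$ this is natural, because a position fails to be a minimizer exactly when it is a weak peak $S[j-1]\le S[j]>S[j+1]$, so a gadget of the shape $\ldots h\,u\,v\,\ell\ldots$ — with $h$ the largest and $\ell$ the smallest letter of the order — turns the $v$-position into a peak (hence uncounted) precisely in the forward-arc case. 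Summing over $A$, $|\Mwk(S)|$ then equals a constant plus the number of back arcs, so the threshold $\ell$ maps directly to a feedback-arc-set bound.

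Three points need care, and the last is the main obstacle. First, $w=2$ and $w\ge 3$ want separate (though analogous) gadget designs — for $w=2$ a window is just an adjacent pair, whereas $w\ge 3$ leaves more room inside a window — and the general-$k$ case I would reduce to $k=1$ by substituting each letter with a length-$k$ block over a slightly enlarged alphabet whose lexicographic order realizes the order on the original letters. Second, since the window slides across gadget boundaries the gadgets are not independent, so I would pad with separator blocks of repeated letters long enough that every window lies within a single arc gadget together with its separators, making the per-arc contributions additive. Third — the crux — the alphabet order is chosen \emph{globally}, so a gadget cannot locally pin down where its auxiliary letters sit; the reduction must therefore attach large padding sections whose contribution to $|\Mwk(S)|$ is minimized only when the auxiliary letters occupy their intended extreme positions, and whose effect dominates the at-most-$|A|$ variation coming from the arc gadgets, forcing every optimal order to behave like a genuine linear arrangement of $V$. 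Checking that the identity ``$|\Mwk(S)|=$ constant $+$ number of back arcs'' survives all this bookkeeping is where the real work lies. One last structural remark: because $S$ is a walk in which every letter occurrence has both a predecessor and a successor, the instances one can realize are essentially balanced, so I expect the reduction to start from the Eulerian restriction \EFAS rather than plain \FAS, which would first have to be shown NP-hard — e.g.\ by the standard device of balancing the in- and out-degrees of a general \FAS instance with dummy arcs while controlling the change in the optimum.
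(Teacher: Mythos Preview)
Your high-level plan --- reduce from a feedback arc set problem, identify alphabet orders with linear arrangements of $V$, and make $|\Mwk(S)|$ affine in the number of back arcs --- matches the paper's, and your NP-membership argument is essentially the same. The concrete execution, however, diverges in two respects.

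First, the paper uses \emph{no auxiliary letters}: $\Sigma=V$ exactly. For $w\ge 3$ each arc gadget $\Tab$ is built solely from $\ta$ and $\tb$ (namely $\ta\tb^{w-1}$ when $w\ge k+2$, and $(\ta\tb)^t\tb\tb$ otherwise), and boundary effects between consecutive arcs are handled not by separator blocks but by repeating each gadget $q+4$ times, with $q$ chosen so that the discrepancy from the four boundary copies is swamped by the $q$ interior copies. For $w=2$ the Eulerian circuit is used \emph{structurally}: consecutive arcs share a vertex, so the per-arc blocks $v_i^{k+1}$ overlap seamlessly and no separators are needed, giving the exact count $1+|A|k+|F|$. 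This completely sidesteps the issue you flag as the ``crux'' --- forcing auxiliary letters $h,\ell$ to the extremes via dominating padding --- by never introducing such letters. Your route may be salvageable, but you would additionally have to verify that the padding's own minimizer count is insensitive to the relative order of the $V$-letters, which is nontrivial.

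Second, the paper treats general $k$ directly, via a case analysis on the gadget shape (\cref{lem:m-1,lem:m-2,lem:m-3}), rather than reducing to $k=1$. Your proposed reduction --- replace each letter by a length-$k$ block over an enlarged alphabet --- is the sketchiest step: length-$k$ substrings of the blown-up string straddle block boundaries, so the induced order on them is not simply the order on the original letters, and the window structure does not rescale cleanly from $(w,1)$ to $(w,k)$. This is where I would expect a genuine obstruction; the paper avoids it altogether.
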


\cref{cor:main} provides theoretical justification as to why \emph{there exist no exact algorithms} for minimizing the minimizers samples, while \emph{there exists a plethora of heuristics} for the same purpose. Notably, \cref{cor:main} settles the complexity landscape of the \MM problem. To cover \emph{all combinations} of input parameters $w$ and $k$ (i.e., for any $w \ge 2$ and $k \ge 1$), we design two non-trivial reductions from the feedback arc set problem~\cite{DBLP:conf/coco/Karp72}. In particular, the reduction in \cref{the:main} ($w \ge 3$) starts with any directed graph, whereas the reduction in \cref{the:main-w2} ($w = 2$) starts with any Eulerian directed graph.

The reductions we present are specifically for the case in which the size of the alphabet $\Sigma$ is variable. If $|\Sigma|$ is bounded by a constant, the problem can be solved in polynomial time: one can simply iterate over the $|\Sigma|!$ permutations of the alphabet, compute the number of minimizers for each ordering in linear time \cite{DBLP:conf/esa/LoukidesP21}, and output a globally best ordering.

\subparagraph{Other Related Work.} Choosing a best total order on $\Sigma$ is generally not new; it has also been investigated in other contexts, e.g., for choosing a best total order for minimizing the number of runs in the Burrows-Wheeler transform~\cite{DBLP:conf/esa/BentleyGT20}; for choosing a best total order for minimizing (or maximizing) the number of factors in a Lyndon factorization~\cite{DBLP:conf/stacs/GibneyT21}; or for choosing a best total order for minimizing the number of bidirectional string anchors~\cite{DBLP:journals/tkde/LoukidesPS23}.

\subparagraph{Paper Organization.} \cref{sec:proof} presents the proof of \cref{the:main} ($w \ge 3$). \cref{sec:proof-w2} presents the proof of \cref{the:main-w2} ($w = 2$).
\cref{sec:MMk} presents a discussion on orderings on $\Sigma^k$ in light of \cref{cor:main}. We conclude this paper with an open question in \cref{sec:fin}.

\section{\MM is NP-complete for \texorpdfstring{$w \ge 3$}{w >= 3}}\label{sec:proof}

We show that the \MM problem is NP-hard by a reduction from the well-known \textsc{Feedback Arc Set} problem~\cite{DBLP:conf/coco/Karp72}. 
Let us first formally define the latter problem. 

\defproblem{\FAS}{A directed graph $G=(V,A)$.}{A set $F \subseteq A$ of minimum size such that $(V, A \setminus F)$ contains no directed cycles.}

We call any such $F \subseteq A$ a \emph{feedback arc set}. 
The decision version of the \FAS problem is naturally defined as follows.

\defproblem{\FASD}{A directed graph $G=(V,A)$ and an integer $\ell'>0$.}{Is there a set $F \subseteq A$ such that $(V, A \setminus F)$ contains no directed cycles and $|F|\leq \ell'$?}

An equivalent way of phrasing this problem is to find an ordering on the set $V$ of the graph's vertices, such that the number of arcs $(u,v)$ with $u > v$ is minimal~\cite{cycles}. Then this is a topological ordering of the graph $(V, A \setminus F)$, and will be analogous to the alphabet ordering in the \MM problem; see~\cite{DBLP:journals/tkde/LoukidesPS23} for a similar application of this idea.\footnote{Our proof is more general and thus involved because it works for any values $w \ge 3$ and $k \ge 1$, whereas the reduction from~\cite{DBLP:journals/tkde/LoukidesPS23} works only for some fixed parameter values.} If \MM is then solved on the instance constructed by our reduction, producing a total order on $V$, taking all arcs $(u,v)$ with $u > v$ should produce a feedback arc set of minimum size, solving the original instance of the \textsc{Feedback Arc Set} problem.

\subsection{Overview of the Technique}
Given any instance $G = (V,A)$ of \FAS, we will construct a string $S$ over alphabet $V$ and of length polynomial in $|A|$. Specifically, we define string $S$ as follows:
\begin{equation*}
    S = \prod_{(\ta, \tb) \in A} \Tab^{q + 4},
\end{equation*}
\noindent where $\Tab$ is a string consisting of the letters $\ta$ and $\tb$, whose length depends only on $w$ and $k$, and $q$ is an integer polynomial in $|A|$, both of which will be defined later. The product $\prod$ of some strings is defined as their concatenation, and $X^q$ denotes $q$ concatenations of string $X$ starting with the empty string; e.g., if $X=\texttt{ab}$ and $q=4$, we have $X^q=(\texttt{ab})^4=\texttt{abababab}$.

String $\Tab$ will be designed such that each occurrence, referred to as a \emph{block}, will contain few minimizers if $\ta < \tb$ in the alphabet ordering, and many minimizers if $\tb < \ta$, analogous to the ``penalty'' of removing the arc $(\ta,\tb)$ as part of the feedback arc set. We denote by $\Mab$ the number of minimizers starting within some occurrence of $\Tab$ in $S$, provided that this $\Tab$ is both preceded and followed by at least two other occurrences of $\Tab$ (i.e., the middle $q$ blocks), when $\ta < \tb$ in the alphabet ordering. We respectively denote by $\Mba$ the number of minimizers starting in such a block when $\tb < \ta$ in the alphabet ordering. This will allow us (see \cref{fig:setting}) to express the total number of minimizers in $S$ in terms of $|F|$, the size of the feedback arc set, minus some discrepancy denoted by $\lambda$. This \emph{discrepancy} is determined by the blocks $\Tab$ that are not preceded or followed by two occurrences of $\Tab$ itself; namely, those that occur near some $T_{\tt{cd}}$, for another arc $(\tt{c},\tt{d})$, or those that occur near the start or the end of $S$.

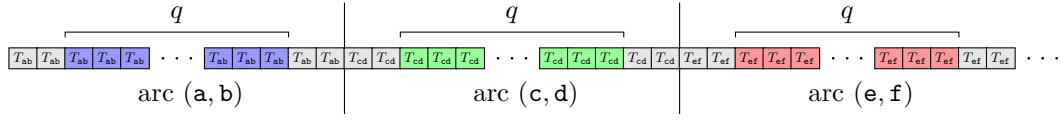
\begin{figure}[ht]
		\centering
		\begin{tikzpicture}
			\def\wid {0.7cm};
			\def\scale {0.525};
			
			\foreach \aa / \bb / \cold [count=\ix, evaluate=\ix as \xs using (12 * (\ix - 1) * \wid * \scale)] in {a/b/blue!40!white,	c/d/green!40!white, e/f/red!40!white} {
					\begin{scope}[xshift=\xs]
						\foreach \x in {0,...,11} {
							\def\col{\cold};
							\ifnum \x<2
								\def\col{black!10!white};
							\fi
							
							\ifnum \x>9
								\def\col{black!10!white};
							\fi
							
							\ifnum \x=5
							\else
								\ifnum \x=6
								\else
									\node [fill=\col, draw=black, minimum width=\wid, scale=\scale] at (\x * \wid * \scale,0) {$T_{\texttt{\aa\bb}}$};
								\fi
							\fi
						}
						
						\fill foreach \x in {0,1,2} {
							(5 * \wid * \scale + \x * 0.5 * \wid * \scale, 0) circle (0.5pt)
						};
						
						\DrawBracket{1.5 * \wid * \scale}{9.5 * \wid * \scale}{1.5em * \scale}{$q$}
						
						
						\node [anchor=north] at (6 * \wid * \scale, -1em * \scale) {arc $(\texttt{\aa},\texttt{\bb})$};
						
						\ifnum \ix<3
							\draw [thin]
								(11.5 * \wid * \scale, 4em * \scale) --
								(11.5 * \wid * \scale, -4em * \scale);
						\fi
					\end{scope}
			}
			
			\fill foreach \x in {0,1,2} {
				(36 * \wid * \scale + \x * 0.5 * \wid * \scale, 0) circle (0.5pt)
			};
		\end{tikzpicture}
		
		\caption{Illustration of the structure of string $S$, with the different gadgets for different arcs in $G$. The highlighted blocks are the ones for which the minimizers are counted in $\Mab$ and $\Mba$.}\label{fig:setting}
	\end{figure}

Let us start by showing an upper and a lower bound on the discrepancy $\lambda$.

\begin{lemma}
    $|A| - 1 \leq  \lambda \le 4 \cdot |A| \cdot |\Tab|$ if $|\Tab| \ge \frac{1}{4}(w + k - 1)$.
    \label{lem:lambda}
\end{lemma}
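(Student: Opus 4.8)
First, I need to recall the setup. The string $S = \prod_{(\ta,\tb)\in A} \Tab^{q+4}$. The discrepancy $\lambda$ measures the difference between the actual number of minimizers in $S$ and the "ideal" count which would be $\sum_{(\ta,\tb)\in A}(q+4) \cdot (\text{Mab or Mba})$ or something similar — actually, let me think. The framework says: total number of minimizers in $S$ = (expression in terms of $|F|$) minus $\lambda$. The $Mab, Mba$ counts are defined only for *middle* blocks (those with ≥2 neighbors of the same type on each side). So within each gadget $\Tab^{q+4}$, there are $q$ "middle" blocks contributing exactly $Mab$ or $Mba$ minimizers each, plus $4$ "boundary" blocks (2 at each end) whose contribution is different, plus the minimizers that might span across gadget boundaries.

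So the discrepancy $\lambda$ is the sum over all $|A|$ gadgets of the deviation caused by those boundary blocks, plus boundary effects at the start/end of $S$.

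**Upper bound $\lambda \le 4|A||\Tab|$.** The plan: each gadget has exactly 4 boundary blocks (the first two and last two occurrences of $\Tab$ in $\Tab^{q+4}$). Each block has length $|\Tab|$, so contains at most $|\Tab|$ starting positions of minimizers. The minimizers whose starting positions lie in the $q$ middle blocks are already accounted for exactly (that's the definition of $Mab/Mba$ — provided the window doesn't reach outside; we need $|\Tab|$ large enough, which is where the hypothesis $|\Tab|\ge\frac14(w+k-1)$ presumably ensures windows starting in middle blocks stay within the gadget, or at least within a region of identical blocks). The remaining starting positions — those in the $4|A|$ boundary blocks — number at most $4|A||\Tab|$. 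Since $\lambda$ is (up to sign) the count of minimizers not captured by the middle-block formula, and it's nonnegative by the lower bound, we get $\lambda \le 4|A||\Tab|$. I'd want to double check the sign convention: $\lambda$ is defined as the amount *subtracted*, so presumably the boundary blocks contain *fewer* minimizers than a middle block would, hence $\lambda\ge 0$; regardless, the number of affected positions is bounded by the total length of boundary blocks, $4|A||\Tab|$.

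**Lower bound $\lambda \ge |A|-1$.** This is the part I'd expect to be the real obstacle, since it requires showing the discrepancy is genuinely forced to be at least linear in $|A|$, not just $\ge 0$. The idea: at each of the $|A|-1$ junctions between consecutive gadgets $\Tab^{q+4}$ and $T_{\tt{cd}}^{q+4}$, something structurally different must happen. I'd argue that near each such junction there is at least one position that is "lost" — e.g., a window straddling the boundary selects a minimizer at a position that would otherwise be double-counted, or the last block of one gadget fails to produce a minimizer it would produce in the middle. Concretely: consider the windows that start in the last boundary block of gadget $i$ and extend into gadget $i+1$. Because the letters change, the lexicographically smallest $k$-substring in such a window differs from the middle-block case, forcing the minimizer position to move. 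I'd formalize this by exhibiting, for each junction, one specific window whose minimizer is *not* at a position counted among the $q(q+4$-type) middle contributions and is not newly created elsewhere — i.e., a net loss of $1$. Summing over the $|A|-1$ internal junctions gives $\lambda\ge|A|-1$. The condition $|\Tab|\ge\frac14(w+k-1)$, i.e., $w+k-1\le 4|\Tab|$, guarantees that a window of length $w+k-1$ cannot span more than the $2$ boundary blocks on one side plus into the neighboring gadget, so these junction effects are localized and don't interfere with each other or with the middle blocks — keeping the bookkeeping clean.

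**Main obstacle.** The upper bound is essentially a counting triviality once the definitions are unwound. The lower bound is the delicate part: I must pin down *exactly* what $\lambda$ counts (the precise formula relating total minimizers to $|F|$, $q$, $Mab$, $Mba$), verify the sign, and then produce a concrete witness minimizer-loss at each gadget junction. The risk is off-by-one errors in the block accounting at the two ends of $S$ (which contribute boundary-block deviations but no "junction", explaining why the bound is $|A|-1$ and not $|A|+1$), and ensuring the chosen witness windows at different junctions are genuinely disjoint in their effect. I would handle this by fixing, once and for all, the decomposition of $S$ into the $|A|(q+4)$ blocks, labeling the $4|A|$ boundary blocks and $q|A|$ middle blocks, and writing $\lambda = \sum_{\text{boundary blocks } B}\big(M_{\text{mid}}(B) - M_{\text{actual}}(B)\big)$ where $M_{\text{mid}}(B)$ is what block $B$ *would* contribute as a middle block and $M_{\text{actual}}(B)$ is its true contribution in $S$; both bounds then follow from analyzing this sum term by term.
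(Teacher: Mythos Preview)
Your upper-bound argument is fine and matches the paper's: the $4|A|$ boundary blocks together contain $4|A|\,|\Tab|$ positions, so at most that many minimizers can start there.

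The gap is in your understanding of what $\lambda$ actually is. You treat it as a signed difference, writing $\lambda = \sum_{B}\bigl(M_{\text{mid}}(B) - M_{\text{actual}}(B)\bigr)$, and you speak of a ``net loss'' at each junction. But from Equation~\eqref{eq:mwk}, the middle-block contribution is exactly $q\Mba|F| + q\Mab(|A|-|F|)$, and $\lambda$ is simply \emph{the number of minimizers whose starting position lies in one of the $4|A|$ boundary blocks}. It is a raw count, not a deviation from some hypothetical middle-block value; in particular it is automatically nonnegative, and your discussion of sign and of boundary blocks containing ``fewer minimizers than a middle block would'' is beside the point.

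Once $\lambda$ is read correctly, the lower bound is much simpler than the witness-window bookkeeping you propose. At each of the $|A|-1$ borders between consecutive arc gadgets, the two last blocks of one gadget together with the two first blocks of the next form four consecutive blocks of total length $4|\Tab| \ge w+k-1$ (this is exactly where the hypothesis is used). Hence these four blocks contain at least one full window, and by Property~1 that window contributes at least one minimizer, which necessarily starts inside those four boundary blocks. The $|A|-1$ groups of four boundary blocks at distinct borders are pairwise disjoint, so these minimizers are distinct, giving $\lambda \ge |A|-1$. No analysis of how the letters change across the junction, or of which specific window loses or gains a minimizer, is needed.
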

\begin{proof}
   We are counting the number of minimizers in $q$ blocks of $\Tab$, for each arc $(\ta, \tb)$. Note that we ignore four blocks for each arc, which is $4 \cdot |A|$ blocks of length $|\Tab|$ in total. This is $4 \cdot |A| \cdot |\Tab|$ positions in total, which gives the upper bound on the number of disregarded minimizers. For the lower bound, note that, by hypothesis, four consecutive blocks are at least as long as a single minimizer window, meaning at least one minimizer must be missed among the four blocks surrounding the border between each pair of consecutive arcs. The lower bound follows by the fact that for $|A|$ arcs we have $|A|-1$ such borders.
\end{proof}

Given the values $\Mab$, $\Mba$, and $\lambda$, we can express the total number of minimizers as a function of some feedback arc set $F$: if an arc $(\ta, \tb)$ is part of the feedback arc set, this corresponds to $\tb < \ta$ in the alphabet ordering, so the corresponding blocks $\Tab$ will each have $\Mba$ minimizers, whereas if $(\ta, \tb)$ is not in $F$, we have $\ta < \tb$ and the blocks will each have $\Mab$ minimizers. Using these values, we can define the number of minimizers in $S$ given some feedback arc set $F$ as
\begin{align}
    \Mwk(S, F) &= q \cdot \Mba \cdot |F| + q \cdot \Mab \cdot (|A| - |F|) + \lambda \nonumber\\
            &= q \cdot (\Mba - \Mab) \cdot |F| + q \cdot \Mab \cdot |A| + \lambda.\label{eq:mwk}
\end{align}

With this in mind, we can prove the following relationship between $\Mwk(S, F)$ and $|F|$:

\begin{lemma}
    Let $\ell'$ be some positive integer and let $\ell = q \cdot (\Mba - \Mab) \cdot (\ell' + 1) + q \cdot \Mab \cdot |A|$. If $\Mba > \Mab$, $|\Tab| \ge \frac{1}{4}(w + k - 1)$, and $q$ is chosen such that $\lambda < q \cdot (\Mba - \Mab)$, then $\Mwk(S, F) \le \ell$ if and only if $|F| \le \ell'$.
    \label{lem:mwk}
\end{lemma}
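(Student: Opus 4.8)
The plan is to read off everything from the linear identity in \eqref{eq:mwk} and reduce the lemma to a couple of one-line inequality chains. First I would abbreviate $c := q\cdot(\Mba - \Mab)$; since $\Mba > \Mab$ by hypothesis and $q$ is a positive integer, $c$ is a positive integer, and the given threshold rewrites as $\ell = c\cdot(\ell'+1) + q\cdot\Mab\cdot|A|$. Plugging this into \eqref{eq:mwk}, which reads $\Mwk(S,F) = c\cdot|F| + q\cdot\Mab\cdot|A| + \lambda$, the claimed equivalence becomes simply ``$c\cdot|F| + \lambda \le c\cdot(\ell'+1)$ iff $|F| \le \ell'$''. The whole proof is then about turning the hypothesis $\lambda < c$ into the two implications.

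For the ``if'' direction, assume $|F| \le \ell'$. Then $\Mwk(S,F) \le c\cdot\ell' + q\cdot\Mab\cdot|A| + \lambda = \ell - c + \lambda = \ell - (c - \lambda)$, and this is $< \ell$ precisely because the hypothesis states $\lambda < q\cdot(\Mba-\Mab) = c$; in particular $\Mwk(S,F) \le \ell$. For the ``only if'' direction I would argue by contraposition: if $|F| > \ell'$, then since $|F|$ and $\ell'$ are integers we have $|F| \ge \ell' + 1$, hence $\Mwk(S,F) = c\cdot|F| + q\cdot\Mab\cdot|A| + \lambda \ge c\cdot(\ell'+1) + q\cdot\Mab\cdot|A| + \lambda = \ell + \lambda$. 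It then remains to note that $\lambda \ge 1$, which follows from the lower bound $\lambda \ge |A| - 1$ of \cref{lem:lambda} (whose hypothesis $|\Tab| \ge \tfrac14(w+k-1)$ is among our assumptions) together with the trivial observation that \FAS is already solved when $|A| \le 1$, so we may assume $|A| \ge 2$. Thus $\Mwk(S,F) \ge \ell + 1 > \ell$, i.e., $\Mwk(S,F) \not\le \ell$, completing the contrapositive and hence the equivalence.

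I do not expect a genuine obstacle here; this is essentially bookkeeping on top of \eqref{eq:mwk}. The only point requiring a moment's care is the strictness needed in the contrapositive direction (we need $\Mwk(S,F)$ to exceed $\ell$, not merely meet it), which is exactly why the \emph{lower} bound on $\lambda$ from \cref{lem:lambda} — and therefore the hypothesis on $|\Tab|$ — is invoked, even though the statement of this lemma only visibly uses the constraint $\lambda < c$. Everything else is substitution and monotonicity.
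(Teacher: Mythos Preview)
Your proposal is correct and follows essentially the same approach as the paper: both directions are read off from the linear identity in \eqref{eq:mwk}, using the hypothesis $\lambda < q\cdot(\Mba-\Mab)$ for the forward implication and the lower bound $\lambda \ge |A|-1 > 0$ from \cref{lem:lambda} (hence the $|\Tab|$ assumption) for the reverse. The only cosmetic difference is that you argue the second direction by contraposition whereas the paper phrases it as a contradiction, and you are slightly more explicit about why the $|\Tab|$ hypothesis is present.
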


\begin{proof}
By hypothesis, $\Mba - \Mab$ is positive, thus, by Equation~\ref{eq:mwk}, $\Mwk(S,F)$ grows linearly with $|F|$. Suppose we have a feedback arc set $F$ with $|F| \le \ell'$. Consider the alphabet ordering inducing $F$ and let $\lambda$ be the corresponding discrepancy for $\Mwk(S, F)$. By hypothesis, we have $\lambda < q \cdot (\Mba - \Mab)$. Substituting the bounds on $|F|$ and $\lambda$ into Equation~\ref{eq:mwk} gives
\begin{align*}
    \Mwk(S, F) &\leq q \cdot (\Mba - \Mab) \cdot \ell' + q \cdot \Mab \cdot |A| + q \cdot (\Mba - \Mab) \\
               &= q \cdot (\Mba - \Mab) \cdot (\ell' + 1) + q \cdot \Mab \cdot |A| = \ell,
\end{align*}
\noindent completing the proof in one direction.

For the other direction, suppose we have picked $F$ such that $\Mwk(S, F) \le \ell$ and assume that $|F| \ge \ell' + 1$ towards a contradiction. Then we have the following two inequalities:
\begin{align*}
    \Mwk(S, F) &\le \ell = q \cdot (\Mba - \Mab) \cdot (\ell' + 1) + q \cdot \Mab \cdot |A| \\
    \Mwk(S, F) &\ge q \cdot (\Mba - \Mab) \cdot (\ell' + 1) + q \cdot \Mab \cdot |A| + \lambda.\hspace{+5mm}\text{(by Equation~\ref{eq:mwk})}
\end{align*}
By Lemma~\ref{lem:lambda}, for any non-trivial instance with $|A|>1$, $\lambda$ is strictly positive, meaning these inequalities are contradictory. Therefore, if $\Mwk(S, F) \le \ell$, it must be that $|F| \le \ell'$.%
\end{proof}

Given $w$ and $k$, we must determine a string $\Tab$ such that $\Mba > \Mab$ and $|\Tab| \ge \frac{1}{4}(w + k - 1)$. We then simply have to choose some $q$, which is polynomial in $|A|$, satisfying $\lambda < q \cdot (\Mba - \Mab)$. At that point we will have constructed a string $S$ for which it holds that the feedback arc set induced by the minimum set of minimizers is also a minimum feedback arc set on $G$, thus completing the reduction.

The following three subsections address the $\Tab$ construction:
\begin{itemize}
    \item \cref{sec:caseA}: $w \ge k + 2$ (Case A);
    \item \cref{sec:caseB}: $w = 3$ and $k \geq 2$ (Case B);
    \item \cref{sec:caseC}: $3 < w < k + 2$ (Case C).
\end{itemize}

It should be clear that the above sections cover all the cases for $w\geq 3$ and $k\geq 1$.
\cref{sec:complete} puts everything together to complete the proof.

\subsection{Case A: \texorpdfstring{$w \ge k + 2$}{w >= k + 2}} \label{sec:caseA}
\begin{lemma}
    Let $\Tab = \ta\tb^{w - 1}$, for $w \ge k + 2$. Then $\Mab = 1$ and $\Mba = w - k$.
    \label{lem:m-1}
\end{lemma}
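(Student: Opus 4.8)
The plan is to verify directly, by a case analysis on the alphabet ordering, that the string $\Tab = \ta\tb^{w-1}$ behaves as claimed when it sits in the ``middle'' of a long run of copies of itself (i.e., preceded and followed by at least two further copies). Concretely, I would fix one such block and enumerate the $w$ windows whose starting position lies inside this block; each such window has length $w+k-1$, and since $w \ge k+2$ (so $w-1 \ge k+1 > k$), every window consists of a handful of letters drawn only from $\{\ta,\tb\}$ coming from the current block and the adjacent blocks. The key structural observation is that the surrounding text near this block is just $(\ta\tb^{w-1})$ repeated, so each window is a substring of $(\ta\tb^{w-1})^\infty$, and I would write each of the $w$ windows out explicitly as such a substring.

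For the case $\ta < \tb$, I would argue $\Mab = 1$ as follows. The lexicographically smallest length-$k$ substring of any window is the one that starts with the most $\ta$'s; but since consecutive $\ta$'s are $w-1 \ge k+1$ apart, no length-$k$ factor contains two $\ta$'s, so the smallest length-$k$ factor is simply the one starting at an $\ta$, namely $\ta\tb^{k-1}$, and its leftmost occurrence in the window is the position of the (unique) $\ta$ in that window, if present. Over the $w$ consecutive windows starting inside the block, the single $\ta$ of the block serves as the chosen minimizer for every window that contains it, and one checks that every window starting inside the block does contain exactly one $\ta$ (either the block's own $\ta$ or the next block's $\ta$) — so all $w$ windows select minimizers, but they coincide at the same few positions, and precisely one distinct minimizer position lies strictly inside the block. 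Hence $\Mab = 1$. For the case $\tb < \ta$, the smallest length-$k$ factor of a window is now $\tb^k$ (a run of $k$ consecutive $\tb$'s), which occurs whenever the window contains a block of at least $k$ consecutive $\tb$'s; the chosen minimizer is the leftmost starting position of such a run. Since each block contributes a run of $w-1$ consecutive $\tb$'s, the leftmost length-$k$ all-$\tb$ window within a window slides by one position as the window slides, and I would count that exactly $w-k$ of these leftmost-$\tb^k$ positions fall strictly inside the fixed block, giving $\Mba = w-k$.

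The main obstacle — and the part that needs care rather than cleverness — is the bookkeeping at the block boundaries: one must be precise about which of the $w$ windows starting inside the block actually ``see'' the block's own $\ta$ versus the previous or next block's $\ta$, and about off-by-one issues in the tie-breaking rule (leftmost occurrence of the lexicographically smallest factor) when a window straddles two blocks and the run of $\tb$'s it contains spans a boundary. This is exactly why the lemma is stated for a block flanked by two further copies on each side: it guarantees that the local picture around any counted window is literally a window of $(\ta\tb^{w-1})^\infty$, so no genuinely new substrings appear and the count is determined purely by the periodic structure. I would organize the argument as a short table of the $w$ windows (indexed by offset $0,\dots,w-1$ of the starting position within the block) in each of the two ordering cases, read off the selected minimizer position for each, and then tally the distinct positions lying strictly inside the block; the claimed values $\Mab=1$ and $\Mba=w-k$ then fall out immediately.
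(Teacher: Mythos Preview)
Your approach is correct and essentially identical to the paper's: both arguments observe that when $\ta < \tb$ every window has the candidate $\ta\tb^{k-1}$ (so the single $\ta$ of the block is the only minimizer position it contributes), and when $\tb < \ta$ every window has a $\tb^k$ candidate (so precisely the $w-k$ starting positions of $\tb^k$ inside the block are minimizers). Your proposed boundary bookkeeping and window-by-window table are sound but more than the paper needs---its proof dispatches both cases in a sentence each from exactly these two observations; one small wording slip to fix in your write-up is that a window of length $w+k-1$ can contain two $\ta$'s, so the ``exactly one $\ta$'' claim should be about the $w$ candidate starting positions, not the whole window.
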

\begin{proof}
    The block has length $w$; inspect \cref{fig:caseA}. Recall that, for the window starting at position $i$, the candidates for its minimizer are the length-$k$ fragments starting at positions $[i,i + w - 1]$. Therefore, for every window starting in a block $\Tab$ (provided it is succeeded by another $\Tab$), a candidate minimizer is $\ta\tb^{k-1}$; so if $\ta < \tb$, each $\Tab$ will contain just one minimizer. Thus we have $\Mab = 1$.
    
    For $\tb < \ta$, consider that $\Tab$ contains $w - k$ occurrences of $\tb^k$, and that for each window, at least one of the candidates for its minimizer is $\tb^k$. Since there is no length-$k$ substring that is lexicographically smaller than $\tb^k$, each occurrence of $\tb^k$ (and nothing else) is a minimizer, so it follows that $\Mba = w - k$. Note that $\Mba > \Mab$ only if $w \ge k + 2$.    
\end{proof}

\begin{figure}[ht]
		\centering
		\begin{tikzpicture}
			\def\gr{black!40!white}
			\def\patt{a,b,b,b,b,b,b};
			
			\foreach \ch [count=\i] in \patt {
				\node[anchor=south, color=\gr] at (\i em - 8em, 0) {\texttt{\ch}};
			}
			
			\foreach \ch [count=\i] in \patt {
				\node[anchor=south, color=\gr] at (\i em + 6em, 0) {\texttt{\ch}};
			}
			
			\foreach \ch [count=\i] in \patt {
				\node [anchor=south] at (\i em - 1em, 0) {\texttt{\ch}};
			}
			\draw [->] (0, 2.3em) -- (0, 1.4em);
			\draw [->] (1em, -1em) -- (1em, -.1em);
			\draw [->] (2em, -1em) -- (2em, -.1em);
			\draw [->] (3em, -1em) -- (3em, -.1em);
			
			\draw [color=\gr] (-0.5em, 2.3em) -- (-0.5em, -1em);
			\draw [color=\gr] (6.5em, 2.3em) -- (6.5em, -1em);
			
			\DrawBracket{2.75em}{6.25em}{1.4em}{$k$}
		\end{tikzpicture}
		
		\caption{Illustration of $3$ copies of $\Tab$ in $S$ for $w = 7$ and $k = 4$, along with its respective minimizers when $\ta < \tb$ (top) and when $\tb < \ta$ (bottom). It can be seen that $\Mab = 1$ and $\Mba = 3$.}
  \label{fig:caseA}
	\end{figure}

\subsection{Case B: \texorpdfstring{$w = 3$ and $k \geq 2$}{w = 3 and k >= 2}}\label{sec:caseB}
\begin{lemma}
    Let $\Tab = (\ta\tb)^t \tb\tb$ with $t = \left\lceil \frac{w + k}{2} \right\rceil$, for $w = 3$ and $k \geq 2$. Then $\Mab = \left\lfloor \frac{k}{2} \right\rfloor + 3$ and $\Mba = \left\lfloor \frac{k}{2} \right\rfloor + 4$.
    \label{lem:m-2}
\end{lemma}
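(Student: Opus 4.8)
The plan is to compute $\Mab$ and $\Mba$ by directly determining, under each of the two possible orders of $\ta$ and $\tb$, which positions of a single ``interior'' block $B$ of $\Tab=(\ta\tb)^t\tb\tb$ are chosen as minimizers. First I would fix notation: index the $2t+2$ positions of $B$ so that the odd positions $1,3,\dots,2t-1$ carry $\ta$, the even positions $2,4,\dots,2t$ carry $\tb$, and positions $2t+1,2t+2$ carry $\tb$. Since $(\ta\tb)^t\tb\tb$ followed by $(\ta\tb)^t\tb\tb$ reads $\dots\ta\tb\,\tb\tb\,\ta\tb\dots$, the junction between two consecutive blocks is a run of \emph{exactly three} $\tb$'s. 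With $w=3$ every window has length $k+2$, and its minimizer is the leftmost of the three positions $i,i+1,i+2$ where the lexicographically smallest length-$k$ factor of the window begins. Because $B$ is flanked by at least two further copies of $\Tab$ on each side, every window that can place its minimizer inside $B$ lies in a stretch consisting only of the letters $\ta$ and $\tb$; hence every length-$k$ comparison reduces to comparing these two letters, the analysis splits into the cases $\ta<\tb$ and $\tb<\ta$, and the comparisons that pin down the ``extra'' minimizers near a junction are already settled by the first two characters of the competing factors --- which is why $k\ge2$ suffices and is needed.

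For $\ta<\tb$, I would argue that in the periodic interior a window starting at an $\ta$-position has there the factor $\ta\tb\ta\tb\cdots$, lexicographically smallest among its three candidates and leftmost among equal ones, so that $\ta$-position is its minimizer; a window starting at a $\tb$-position selects instead its unique $\ta$-starting candidate. Thus the interior contributes exactly the $t$ periodic $\ta$-positions $1,3,\dots,2t-1$. At the right junction, all three candidates of the window starting at $2t$ begin with $\tb$ and the one at $2t+2$ is the only one whose second letter is $\ta$, so that window's minimizer is $2t+2$; conversely, no window covering $2t$ or $2t+1$ selects it, since each such window also has an $\ta$-starting, hence strictly smaller, candidate. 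The left junction adds nothing to $B$ (its run of $\tb$'s lies in the preceding block), and one checks that position $1$ is already on the list; hence $\Mab=t+1$. For $\tb<\ta$ the letters swap roles: the periodic interior now contributes the $t$ $\tb$-positions $2,4,\dots,2t$; at the right junction the window starting at $2t+1$ selects $2t+1$ (its factor $\tb\tb\cdots$ has the most leading $\tb$'s among the candidates), the window starting at $2t+2$ selects $2t+2$ (its only other $\tb$-starting candidate sits in the next block, to the right), and $2t$ is already counted; the left junction again adds only minimizers lying in the preceding block, while position $1$ is not a minimizer here; hence $\Mba=t+2$.

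Finally, since $w=3$ we have $t=\lceil (k+3)/2\rceil$, and a one-line parity check gives $t=\lfloor k/2\rfloor+2$ for every $k\ge 2$; therefore $\Mab=t+1=\lfloor k/2\rfloor+3$ and $\Mba=t+2=\lfloor k/2\rfloor+4$, as claimed. In passing, $\Mba=\Mab+1>\Mab$ and $|\Tab|=2t+2\ge k+5\ge\frac14(w+k-1)$, so the hypotheses of \cref{lem:lambda} and \cref{lem:mwk} are satisfied. The step I expect to be the main obstacle is the junction analysis: one must establish that a run of exactly three $\tb$'s contributes exactly one extra minimizer ($2t+2$) when $\ta<\tb$ but exactly two ($2t+1,2t+2$) when $\tb<\ta$, and that this count is insensitive to how far a length-$k$ factor reaches past the run --- which has to be rechecked in the smallest case $k=2$, where the factors are too short to see beyond the three $\tb$'s and where the assumption $k\ge2$ is doing real work.
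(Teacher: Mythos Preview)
Your proposal is correct and follows essentially the same approach as the paper: both arguments count the $\ta$'s (resp.\ $\tb$'s) in the alternating part as minimizers when $\ta<\tb$ (resp.\ $\tb<\ta$), then analyse the three-$\tb$ run at the block boundary to pick up the one extra minimizer at position $2t+2$ in the $\ta<\tb$ case and the two extra minimizers at $2t+1,2t+2$ in the $\tb<\ta$ case, arriving at $\Mab=t+1$ and $\Mba=t+2$. One small wording slip: the window starting at $2t$ has no $\ta$-starting candidate, so your blanket justification ``each such window also has an $\ta$-starting candidate'' does not literally cover it---but you already handled that window separately, so the conclusion stands.
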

\begin{proof}
Since $w=3$, for every window, the minimizer is one out of three length-$k$ fragments; inspect \cref{fig:caseB}. Every $\ta$ in the block has a $\tb$ before it. For any window starting at a position preceding an $\ta$, two of the candidates start with a $\tb$ and the other starts with an $\ta$. 
As an example consider the window $\texttt{babab}$ preceding an $\ta$ in \cref{fig:caseB}.
We have that the first and the third candidates start with a $\tb$ and the second starts with an $\ta$.
Therefore, if $\ta < \tb$, the candidate starting with an $\ta$ will be chosen and every $\ta$ in $\Tab$ is a minimizer. Only the window starting at the third-to-last position of the block will not consider any length-$k$ substring starting with an $\ta$ as its minimizer, as therein we have three $\tb$'s occurring in a row. Since $k \geq 2$, the last $\tb$ of the block will be chosen if $\ta < \tb$. Thus, $\Mab$ counts every $\ta$ and one $\tb$, which gives:
\begin{equation*}
    \Mab = t + 1 = \left\lceil \frac{w + k}{2} \right\rceil + 1 = 
    \left\lceil \frac{3+k}{2} \right\rceil + 1 = \left\lfloor \frac{k+2}{2} \right\rfloor + 1+1=\left\lfloor \frac{k}{2} \right\rfloor + 3.
\end{equation*}
For $\Mba$, we apply the same logic to conclude that every $\tb$ surrounded by $\ta$'s is a minimizer, which accounts for all $\tb$'s except the final three, which occur at positions $[2t,2t + 2]$:

\begin{itemize}
    \item For the window starting at position $2t$, the three minimizer candidates start, respectively, 
    with $\tb\tb$, $\tb\tb$ and $\tb\ta$. 
    Since $k \ge 2$, the first candidate ($2t$) will be the minimizer because it is lexicographically a smallest and the leftmost ($\texttt{b}<\texttt{a}$).
    \item For the window starting at position $2t + 1$, the first two candidates start, respectively, 
    with $\tb\tb$ and $\tb\ta$, and the third starts with an $\ta$.
    The first candidate ($2t + 1$) will be the minimizer, because it is lexicographically smaller ($\texttt{b}<\texttt{a}$).
    \item For the window starting at position $2t + 2$, the first and third candidates start with a $\tb$ whereas the second starts with an $\ta$. The third candidate starts at the second position of the next $\Tab$-block. Since $2t > k + 1$, this candidate consists of only $\tb\ta\tb\ta\dots$ alternating for $k$ letters. It is equal to the first candidate, so by tie-breaking the first candidate ($2t + 2$) is the minimizer as it is the leftmost.
\end{itemize}

Thus, every $\tb$ in the block will be a minimizer if $\tb < \ta$, and we have:

\begin{equation*}
    \Mba = t + 2 = \left\lceil \frac{3+k}{2} \right\rceil + 2 = \left\lfloor \frac{k}{2} \right\rfloor + 4.
\end{equation*}
\end{proof}

	\begin{figure}[ht]
		\centering
		\begin{tikzpicture}
			\def\gr{black!40!white}
			\def\patt{a,b,a,b,a,b,b,b};
			
			\foreach \ch [count=\i] in \patt {
				\node[anchor=south, color=\gr] at (\i em - 9em, 0) {\texttt{\ch}};
			}
			
			\foreach \ch [count=\i] in \patt {
				\node[anchor=south, color=\gr] at (\i em + 7em, 0) {\texttt{\ch}};
			}
			
			\foreach \ch [count=\i] in \patt {
				\node [anchor=south] at (\i em - 1em, 0) {\texttt{\ch}};
			}
			\draw [->] (0, 2.3em) -- (0, 1.4em);
			\draw [->] (2em, 2.3em) -- (2em, 1.4em);
			\draw [->] (4em, 2.3em) -- (4em, 1.4em);
			\draw [->] (7em, 2.3em) -- (7em, 1.4em);
			\draw [->] (1em, -1em) -- (1em, -.1em);
			\draw [->] (3em, -1em) -- (3em, -.1em);
			\draw [->] (5em, -1em) -- (5em, -.1em);
			\draw [->] (6em, -1em) -- (6em, -.1em);
			\draw [->] (7em, -1em) -- (7em, -.1em);
			
			\draw [color=\gr] (-0.5em, 2.3em) -- (-0.5em, -1em);
			\draw [color=\gr] (7.5em, 2.3em) -- (7.5em, -1em);
			
		\end{tikzpicture}
		
		\caption{$\Tab$ for $w = 3$ and $k = 3$, with its respective minimizers. The last $\tb$ is a minimizer even when $\ta < \tb$, because $w = 3$. In this situation, $\Mab = 4$ and $\Mba = 5$.}
        \label{fig:caseB}
	\end{figure}

\subsection{Case C: \texorpdfstring{$3 < w < k + 2$}{3 < w < k + 2}} \label{sec:caseC}
\begin{lemma}
    Let $\Tab = (\ta\tb)^t \tb\tb$ with $t = \left\lceil \frac{w + k}{2} \right\rceil$, for $3 < w < k + 2$. Then 
    \begin{itemize}
        \item if $k$ is even, $\Mab = \frac{k}{2} + 2 + p$ and $\Mba = \frac{k}{2} + 3 + p$, where $p = (w + k) \mod 2$;
        \item if $k$ is odd, $\Mab = \left\lfloor \frac{k}{2}\right\rfloor + 3$ and $\Mba = \left\lfloor \frac{k}{2}\right\rfloor + 4$.
    \end{itemize}
    \label{lem:m-3}
\end{lemma}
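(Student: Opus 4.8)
The plan is to follow the blueprint of the proof of \cref{lem:m-2}, upgrading it to $w \ge 4$. I would first record the combinatorial layout of a ``middle'' occurrence of $\Tab = (\ta\tb)^t\tb\tb$ inside $S$: numbering its positions $1\dd 2t+2$, the letter $\ta$ occupies exactly the odd positions $1,3,\dots,2t-1$ and $\tb$ the even positions $2,\dots,2t$ together with $2t+1$ and $2t+2$. Concatenating blocks creates a run of \emph{exactly} three $\tb$'s at positions $2t,2t+1,2t+2$ of every block, flanked by the $\ta$ at position $2t-1$ and the $\ta$ at position $1$ of the next block; the only place where the periodic pattern $\ta\tb\ta\tb\cdots$ is violated is the ``defect'' at the odd position $2t+1$, which carries a $\tb$. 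Since $w \ge 4$ while the maximal run of equal letters is $1$ for $\ta$ and $3$ for $\tb$, the length-$w$ candidate set $\{i,\dots,i+w-1\}$ of every window contains at least one $\ta$ and at least one $\tb$; as $\ta<\tb$ (resp.\ $\tb<\ta$) makes every $\ta$-starting fragment smaller than every $\tb$-starting one (resp.\ vice versa), it follows that when $\ta<\tb$ only $\ta$-positions are minimizers, and when $\tb<\ta$ only $\tb$-positions are. Hence it suffices to count, in each ordering, the positions of the relevant letter that are realized as the minimizer of some window.

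For $\ta<\tb$ I would call the length-$k$ fragment starting at an $\ta$-position \emph{clean} if it does not reach the defect and \emph{degraded} otherwise. A clean fragment equals $(\ta\tb)^{\lceil k/2\rceil}$ truncated to length $k$ and is the lexicographically smallest length-$k$ fragment occurring anywhere in $S$; a degraded one coincides with it up to the defect and is strictly larger there, and among degraded fragments the one with the longest clean prefix (i.e.\ whose starting position is leftmost) is smallest. Therefore the minimizer of a window is the leftmost clean $\ta$-position in its candidate set if one exists, and otherwise the leftmost $\ta$-position in its candidate set. Every clean $\ta$-position is then a minimizer (witnessed by the window starting one position to its left), while a degraded $\ta$-position is a minimizer exactly when it is the leftmost $\ta$-position of some ``clean-free'' window, i.e.\ a window whose candidate set lies strictly between the last clean $\ta$ of one block and the first clean $\ta$ of the next. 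Counting the clean $\ta$-positions (a block of consecutive odd numbers whose cardinality is governed by $t$ and $k$) and the degraded ones selected in this way (governed by $w$, and capped by the total number of degraded $\ta$-positions when $w=4$), then substituting $t=\lceil (w+k)/2\rceil$ and splitting on the parity of $k$ and of $w+k$, produces the stated value of $\Mab$.

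For $\tb<\ta$ I would run the dual argument. Listing the length-$k$ fragments starting at a $\tb$-position in increasing order, one gets $\tb\tb\tb\cdots$ at position $2t$ (the unique length-$3$ run of $\tb$'s), then $\tb\tb\ta\cdots$ at position $2t+1$, then, for $m=1,2,\dots$, the fragment $(\tb\ta)^m\tb\tb\tb\cdots$ at position $2t-2m$ --- each of these is strictly smaller than the ``clean'' fragment $(\tb\ta)^{\lceil k/2\rceil}$ (truncated to length $k$) precisely while it reaches the $\tb$-run, i.e.\ while $m \le \lfloor (k-2)/2 \rfloor$ --- and finally the clean fragment itself, which is what one reads from position $2t+2$ and from every $\tb$-position deep in the alternating part. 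Again the minimizer of a window is the leftmost position attaining the in-window minimum, so the set of $\tb$-positions occurring as minimizers consists of: the $\lfloor k/2\rfloor$ positions $2t, 2t-2, \dots, 2t-2(\lfloor k/2\rfloor-1)$; positions $2t+1$ and $2t+2$; and a short prefix of the clean $\tb$-positions of the alternating part (those far enough from the barrier that some window can isolate them), whose length is $1$ or $2$ according to the parities of $k$ and $w$. Summing and simplifying with $t=\lceil(w+k)/2\rceil$ yields $\Mba$, which works out to $\Mab + 1$ in every subcase, as the statement requires.

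The step I expect to be the crux is the boundary bookkeeping: determining precisely which degraded $\ta$-positions, respectively which clean $\tb$-positions near the ends of a block, survive as minimizers. The surviving count is a floor/ceiling expression in $w$, $k$ and $t=\lceil (w+k)/2\rceil$ that behaves differently according to the parities of $k$ and of $w+k$ and has a separate saturated regime at $w=4$; reconciling these cases is exactly what yields the $\lfloor k/2\rfloor$-versus-$k/2$ split and the correction term $p=(w+k)\bmod 2$ in the statement. Extra care is needed with windows straddling a block boundary, since such a window may simultaneously see a ``bad'' fragment near the end of one block and a ``good'' one at the start of the next (this is the mechanism that shadows the trailing degraded, respectively clean, positions), and one must check that the definitions of $\Mab$ and $\Mba$ (which assume the block is surrounded by two further copies of $\Tab$) leave enough room for all of these witness windows to exist.
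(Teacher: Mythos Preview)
Your approach is essentially the same as the paper's. Both arguments begin with the observation that for $w\ge 4$ every length-$w$ window contains both an $\ta$ and a $\tb$, so only one letter type contributes minimizers in each ordering; both then classify the relevant positions by whether the length-$k$ fragment reaches the ``defect'' at position $2t+1$ and count accordingly. The paper packages the $\Mab$ count as ``all $\ta$'s up to position $\max\{i,j+1\}$'' (with $i$ the last clean position and $j$ the last window-start not reaching the next block) and uses $w<k+2$ to resolve the max; your clean/degraded split and your ``leftmost $\ta$ of a clean-free window'' criterion amount to the same set, since the surviving degraded $\ta$'s are exactly the odd positions in $(i_0,\,2t+4-w] = (i_0,\,k+p+4]$. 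Likewise, for $\Mba$ the paper's two ranges ``from $j$ onwards'' and ``up to $i+1$'' coincide with your list $\{2t,2t-2,\dots\}\cup\{2t+1,2t+2\}$ plus the short prefix $\{2,\dots\}$.

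One small remark: your anticipated ``separate saturated regime at $w=4$'' does not actually materialize as a case split. The condition $d_m \le 2t+4-w = k+p+4$ holds for \emph{all} degraded positions whenever $w\le 5$, but the count $\lceil(k+p+4)/2\rceil$ is uniformly valid, and the paper handles it without singling out $w=4$.
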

\begin{proof}
Every length-$w$ fragment of the block contains at least one $\ta$ and at least one $\tb$; inspect \cref{fig:caseC}. Because of this, only $\ta$'s will be minimizers if $\ta < \tb$ and only $\tb$'s if $\tb < \ta$ (unlike when $w = 3$, as shown in \cref{sec:caseB}). We start by counting $\Mab$. Suppose we are determining the minimizer at position $i$. Every candidate we consider is a string of alternating $\ta$'s and $\tb$'s (starting with an $\ta$), in which potentially one $\ta$ is substituted by a $\tb$ (if the length-$k$ fragment contains the $\tb\tb\tb$ at the end of the block). A lexicographically smallest length-$k$ fragment is one in which this extra $\tb$ appears the latest, or not at all.

First, we will consider the number of length-$k$ fragments in which the extra $\tb$ does not occur. For these fragments, it is the case that no other fragment in the block is lexicographically smaller when $\ta < \tb$, so it is automatically picked as minimizer at the position corresponding to the start of the length-$k$ fragment. The extra $\tb$ appears at position $2t + 1$ in the block, so this applies to all length-$k$ fragments starting with an $\ta$ that end before position $2t + 1$. That is, all $\ta$'s up to (and including) position $i = 2t - (k - 1) = 2\left\lceil \frac{w + k}{2} \right\rceil - k + 1 = w + k + p - k + 1 = w + p + 1$, where $p = (w + k) \mod 2$.

Next, we consider the length-$k$ fragments that do include the extra $\tb$. At any position past $i$, the smallest candidate will be the first one starting with an $\ta$, \emph{unless} one of the candidates appears in the next $\Tab$-block, in which case the minimizer will be the first position of this next block (because this candidate does not include the extra $\tb$ and is therefore smaller than any candidate before it). Specifically, this is the case if position $|\Tab| + 1$ is one of the $w$ candidates. Therefore, all windows starting at positions up to and including $j = |\Tab| + 1 - w =(2\lceil \frac{w+k}{2}\rceil+2)+1-w= w + k + 3 + p - w = k + p + 3$ will have as their minimizer the first position with an $\ta$, meaning that all $\ta$'s up to position $j + 1$ are minimizers.

\begin{figure}[ht]
	\centering
	\begin{tikzpicture}
		\def\gr{black!40!white}
		\def\patt{a,b,a,b,a,b,a,b,b,b};
		
		\foreach \ch [count=\i] in \patt {
			\node[anchor=south, color=\gr] at (\i em - 11em, 0) {\texttt{\ch}};
		}
		
		\foreach \ch [count=\i] in \patt {
			\node[anchor=south, color=\gr] at (\i em + 9em, 0) {\texttt{\ch}};
		}
		
		\foreach \ch [count=\i] in \patt {
			\node [anchor=south] at (\i em - 1em, 0) {\texttt{\ch}};
		}
		
		\draw [color=\gr] (-0.5em, 2.3em) -- (-0.5em, -1em);
		\draw [color=\gr] (9.5em, 2.3em) -- (9.5em, -1em);
		
		\DrawBracket{4.75em}{8.25em}{1.4em}{$k$}
		\DrawBracketUnder{6.75em}{10.25em}{-.4em}{$w$}
		
		\node[anchor=south] at (4em, 1.6em) {$i$};
		\draw [-|] (-.5em, 1.45em) -- (4.25em, 1.45em);
		\node[anchor=north] at (6em, -.2em) {$j$};
		\draw [-|] (-.5em, -.15em) -- (7.25em, -.15em);
	\end{tikzpicture}
	
	\caption{$\Tab$ for $w = 4$ and $k = 4$, showing the positions $i$ and $j$ for counting $\Mab$. Position $i$ is the final position at which the length-$k$ fragment does not contain $\tb\tb$, whereas $j$ is the final position for which the starting position of the next $\Tab$-block is not a candidate. When $\ta < \tb$, the minimizers in the block are all $\ta$'s up to position $\max \{i,j + 1\}$.}
    \label{fig:caseC}
	\end{figure}

We now have that all $\ta$'s up to position $i = w + p + 1$ and all $\ta$'s up to position $j + 1 = k + p + 4$ are minimizers. Thus we need to count the $\ta$'s up to position $\max \{w + p + 1, k + p + 4\}$. Because, by hypothesis, $w < k + 2$, this maximum is equal to $k + p + 4$. The first $k + p + 4$ letters of the block are alternating $\ta$'s and $\tb$'s, so we get

\begin{equation*}
    \Mab = \left\lceil \frac{k + p + 4}{2} \right\rceil = \left\lceil \frac{k + p}{2} \right\rceil + 2 = \begin{cases}
        \frac{k}{2} + 2 + p & \text{ if } k \text{ is even;} \\
        \left\lfloor \frac{k}{2}\right\rfloor + 3 & \text{ if } k \text{ is odd.}
    \end{cases}
\end{equation*}

Next, we compute $\Mba$. We start by showing that the final three $\tb$'s in $\Tab$ are all minimizers. There is only one length-$k$ fragment that starts with $\tb\tb\tb$ and one that starts with $\tb\tb\ta$, so the first two of these final $\tb$'s will both be minimizers for the windows that start with $\tb\tb\tb$ and $\tb\tb\ta$. For the window that starts at the third $\tb$, which is position $|\Tab|$, note that the entire window does not contain $\tb\tb$ at all; it consists of only alternating $\tb$'s and $\ta$'s as the window has length $w + k - 1$ whereas the next occurrence of $\tb\tb$ is after $w + k + p$ positions. Because the window does not contain $\tb\tb$, none of its candidates are smaller than $\tb\ta\tb\ta\dots$ alternating, which first appears at the start of the window. Therefore, the third $\tb$ is also a minimizer.

The rest of the minimizers consist of two sets. The first set corresponds to positions for which no candidate is smaller than $\tb\ta\tb\ta\dots$ (alternating for $k$ letters). These are all positions with a $\tb$, up to a certain position $i$ (to be computed later), after which there will also be a smaller minimizer candidate, i.e., one that contains $\tb\tb$; inspect~\cref{fig:caseCb}. This is the second set of minimizers: ones that start with $\tb$ and contain $\tb\tb$ at some point. These are all positions with a $\tb$ from some position $j$ onwards.

\begin{figure}[ht]
		\centering
		\begin{tikzpicture}
			\def\gr{black!40!white}
			\def\patt{a,b,a,b,a,b,a,b,b,b};
			
			\foreach \ch [count=\i] in \patt {
				\node[anchor=south, color=\gr] at (\i em - 11em, 0) {\texttt{\ch}};
			}
			
			\foreach \ch [count=\i] in \patt {
				\node[anchor=south, color=\gr] at (\i em + 9em, 0) {\texttt{\ch}};
			}
			
			\foreach \ch [count=\i] in \patt {
				\node [anchor=south] at (\i em - 1em, 0) {\texttt{\ch}};
			}
			
			\draw [color=\gr] (-0.5em, 2.3em) -- (-0.5em, -1em);
			\draw [color=\gr] (9.5em, 2.3em) -- (9.5em, -1em);
			
			\DrawBracket{4.75em}{8.25em}{1.4em}{$k$}
			\DrawBracket{0.75em}{4.25em}{1.4em}{$w$}
			
			\node[anchor=north] at (1em, -.2em) {$i$};
			\draw [-|] (-.5em, -.15em) -- (2.25em, -.15em);
			
			\node[anchor=north] at (5em, -.2em) {$j$};
			\draw [|-] (4.75em, -.15em) -- (9.5em, -.15em);
		\end{tikzpicture}
		
		\caption{$\Tab$ for $w = 4$ and $k = 4$, showing the positions $i$ and $j$ when counting $\Mba$: $j$ is the position of the first $\tb$ at which the corresponding length-$k$ fragment contains $\tb\tb$; $i$ is the last position at which $j$ is not a candidate for its minimizer. When $\tb < \ta$, the minimizers in this block are all $\tb$'s up to position $i + 1$ and all $\tb$'s from position $j$ onwards.}
        \label{fig:caseCb}
	\end{figure}

We start by computing $j$. Position $j$ is the first position such that the length-$k$ fragment starting at $j$ starts with a $\tb$ and contains $\tb\tb$. If $k$ is odd, the fragment ends at position $2t + 2$ with $\tb\tb\tb$ as suffix; if $k$ is even, the fragment ends at position $2t + 1$ with $\tb\tb$ as suffix. We have

\begin{equation*}
    j = \begin{cases}
        2t + 1 - k + 1 = w + p + 2 & \text{ if } k \text{  is even;} \\
        2t + 2 - k + 1 = w + p + 3 & \text{ if } k \text{ is odd.}
    \end{cases}
\end{equation*}
Note that $j = w + p + 2 + (k \mod 2)$. Every $\tb$ from position $j$ onwards is a minimizer. This includes the three $\tb$'s at the end of the pattern (at positions $2t$ through $2t + 2$), as well as the ones between positions $j$ and $2t - 1$ (both inclusive). Thus we have

\begin{align*}
         3 + \left\lfloor \frac{2t - j}{2} \right\rfloor &= 3 + \left\lfloor \frac{w + k + p - (w + p + 2 + (k \mod 2))}{2} \right\rfloor \\
    = \; 3 + \left\lfloor \frac{k - 2 - (k \mod 2)}{2} \right\rfloor &= 2 + \left\lfloor \frac{k}{2} \right\rfloor \\
\end{align*}

\noindent $\tb$'s from position $j$ onwards.

Next, we compute $i$ and count the number of $\tb$'s up to $i$. We take the last position for which the length-$k$ fragment starting at $j$ is not a candidate. This is $i = j - w$. The minimizer for the window starting at position $i + 1$ is the length-$k$ fragment starting at $j$, since this is the only candidate that contains $\tb\tb$. However, if there is a $\tb$ at position $i + 1$,\footnote{Consider the case when $\Tab=\texttt{abababababbb}$ with $w=5$ and $k=4$. For this block, we have $i=3$ and $j=8$. Indeed $i=j-w=3$ and at position $i+1=4$ of the block we have a $\tb$. Position $4$ will be selected as the minimizer for the window starting at position $3$.} then $i + 1$ will still be a minimizer: when we take the minimizer for position $i$, the length-$k$ fragment containing $\tb\tb$ will not be a candidate so it will take the first length-$k$ fragment starting with a $\tb$, which is at position $i + 1$. Therefore, we  count all $\tb$'s that appear \emph{up to} $i + 1$:

\begin{align*}\hspace{-3mm}
   \left\lfloor \frac{i + 1}{2} \right\rfloor =& \left\lfloor \frac{j - w + 1}{2} \right\rfloor = \left\lfloor \frac{(w+p+2+(k\mod 2)) - w + 1}{2} \right\rfloor=\left\lfloor \frac{p + 3 + (k \mod 2)}{2} \right\rfloor \\
    = &1 + \left\lfloor \frac{1 + p + (k \mod 2)}{2} \right\rfloor = \begin{cases}
        1 + p & \text{ if } k \text{ is even;} \\
        2     & \text{ if } k \text{ is odd.}
    \end{cases}
\end{align*}

Adding the two numbers of $\tb$'s together gives (inspect \cref{fig:caseCc}):

\begin{align*}
    \Mba &= 2 + \left\lfloor \frac{k}{2} \right\rfloor + \begin{cases}
        1 + p & \text{ if } k \text{ is even;} \\
        2     & \text{ if } k \text{ is odd;}
    \end{cases} \\
    &= \begin{cases}
        \frac{k}{2} + 3 + p & \text{ if } k \text{ is even;} \\
        \left\lfloor \frac{k}{2} \right\rfloor + 4     & \text{ if } k \text{ is odd.}
    \end{cases}
\end{align*}
\end{proof}

\begin{figure}[ht]
    \centering
    \begin{tikzpicture}
        \def\gr{black!40!white}
        \def\patt{a,b,a,b,a,b,a,b,b,b};
        
        \foreach \ch [count=\i] in \patt {
            \node[anchor=south, color=\gr] at (\i em - 11em, 0) {\texttt{\ch}};
        }
        
        \foreach \ch [count=\i] in \patt {
            \node[anchor=south, color=\gr] at (\i em + 9em, 0) {\texttt{\ch}};
        }
        
        \foreach \ch [count=\i] in \patt {
            \node [anchor=south] at (\i em - 1em, 0) {\texttt{\ch}};
        }
        
        \draw [color=\gr] (-0.5em, 2.3em) -- (-0.5em, -1em);
        \draw [color=\gr] (9.5em, 2.3em) -- (9.5em, -1em);
        
        \draw [->] (0, 2.3em) -- (0, 1.4em);
        \draw [->] (2em, 2.3em) -- (2em, 1.4em);
        \draw [->] (4em, 2.3em) -- (4em, 1.4em);
        \draw [->] (6em, 2.3em) -- (6em, 1.4em);
        \draw [->] (1em, -1em) -- (1em, -.1em);
        \draw [->] (5em, -1em) -- (5em, -.1em);
        \draw [->] (7em, -1em) -- (7em, -.1em);
        \draw [->] (8em, -1em) -- (8em, -.1em);
        \draw [->] (9em, -1em) -- (9em, -.1em);
    \end{tikzpicture}
    
    \caption{$\Tab$ for $w = 4$ and $k = 4$, showing its minimizers for $\ta < \tb$ (top) and $\tb < \ta$ (bottom). In this situation, $\Mab = 4$ and $\Mba = 5$.}
    \label{fig:caseCc}
\end{figure}
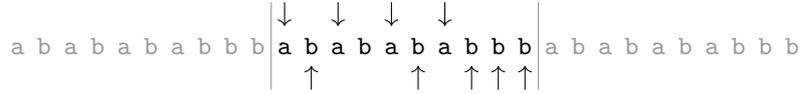

\subsection{Wrapping up the Reduction}\label{sec:complete}
\begin{proof}[Proof of Theorem~\ref{the:main}]
$\MMD$ asks whether or not there exists some ordering on $\Sigma$ such that a string $S \in \Sigma^n$ has at most $\ell$ minimizers for parameters $w$ and $k$. Given $w$, $k$ and an ordering on $\Sigma$, one can compute the number of minimizers for those parameters in linear time~\cite[Theorem 3]{DBLP:conf/esa/LoukidesP21}. Therefore, one can use an alphabet ordering as a certificate to verify a \textsf{YES} instance of $\MMD$ simply by comparing the computed number of minimizers to $\ell$. This proves that the $\MMD$ problem is in NP.
To prove that \MMD is NP-hard, we use a reduction from \FASD (see \cref{sec:proof} for definition), which is a well-known NP-complete problem~\cite{DBLP:conf/coco/Karp72}.

We are given an instance $G=(V,A)$ of \FAS and an integer $\ell'$, and we are asked to check if $G$ contains a feedback arc set with at most $\ell'$ arcs. We will construct an instance $S$ of \MMD, for given parameters $w \ge 3$ and $k \ge 1$, such that: the minimum number of minimizers in $S$, over all alphabet orderings, is at most some value $\ell$ if and only if $G$ contains a feedback arc set of size at most $\ell'$.

By Lemma~\ref{lem:lambda}, we have $\lambda \le 4 \cdot |A| \cdot |\Tab|$. Given $w$ and $k$, we must determine a string $\Tab$ such that $\Mba > \Mab$ and $|\Tab| \ge \frac{1}{4}(w + k - 1)$, and also choose some $q$ satisfying $\lambda < q \cdot (\Mba - \Mab)$ (see Lemma~\ref{lem:mwk}). Let $\Sigma = V$ and let $S = \prod_{(\ta, \tb) \in A}\Tab^{q + 4}$, with $\Tab$ and $q$ to be determined depending on $w$ and $k$.

\subparagraph{Case A: $w \ge k + 2.$} Let $\Tab = \ta\tb^{w - 1}$, so $|\Tab| = w$. 
Since, by hypothesis, the maximal value of $k$ is $w-2$, and since $|\Tab| = w$, we have that $4|\Tab| \geq 2w -3$.  
Thus, the condition on the length of $\Tab$ always holds. 
By Lemma~\ref{lem:m-1}, $\Mba - \Mab = w - k - 1$. We choose $q = 4 \cdot w \cdot |A| + 1$, so that $\lambda \le 4 \cdot |A|\cdot w < q \cdot (w - k - 1)$. Thus, $\lambda < q\cdot (\Mba-\Mab)$.

\subparagraph{Case B and Case C: $w < k + 2.$} Let $\Tab = (\ta\tb)^t\tb\tb$ for $t = \left\lceil \frac{w + k}{2} \right\rceil$. We have $|\Tab| = 2t + 2 = 2(\left\lceil \frac{w + k}{2} \right\rceil) + 2 = w + k + p + 2$, where $p = (w + k) \mod 2$. The condition on the length of $\Tab$ always holds because $w+k+p+2>w+k-1$.

\begin{itemize}
    \item If $w = 3$, then by Lemma~\ref{lem:m-2}, $\Mba - \Mab = \left\lfloor \frac{k}{2} \right\rfloor + 4 - (\left\lfloor \frac{k}{2} \right\rfloor + 3) = 1$.
    \item If $w > 3$, then by Lemma~\ref{lem:m-3}:
    \begin{itemize}
        \item if $k$ is even, $\Mba - \Mab = \frac{k}{2} + 3 + p - (\frac{k}{2} + 2 + p) = 1$;
        \item if $k$ is odd, $\Mba - \Mab = \left\lfloor \frac{k}{2} \right\rfloor + 4 - (\left\lfloor \frac{k}{2} \right\rfloor + 3) = 1$.
    \end{itemize}
\end{itemize}

In any case, $\Mba - \Mab = 1$. We choose $q = 4 \cdot |A| \cdot (w + k + 3) + 1$, so that $\lambda \le 4 \cdot |A| \cdot (w + k + p + 2) < q$. Thus, $\lambda < q\cdot (\Mba-\Mab)$. \medskip

Finally, we set $\ell = q \cdot (\Mba - \Mab) \cdot (\ell' + 1) + q \cdot \Mab \cdot |A|$. By Lemma~\ref{lem:mwk}, we have that $\Mwk(S, F) \le \ell$ if and only if $|F| \le \ell'$; in other words, $G$ contains a feedback arc set of size at most $\ell'$ if and only if $S$ has an alphabet ordering with at most $\ell$ minimizers.

Hence we have shown that $(G, \ell')$ is a \textsf{YES} instance of \MMD if and only if $(S, \ell)$ is a \textsf{YES} instance of \FASD. Moreover, the length of $S$ is $(q + 4) \cdot |A| \cdot |\Tab|$, with $\Tab$ being of polynomial length, so the reduction can be performed in polynomial time. The existence of a polynomial-time reduction from \FASD to \MMD proves our claim: \MMD is NP-complete if $w \ge 3$ and $k \ge 1$.
\end{proof}

\section{\MM is NP-complete for \texorpdfstring{$w = 2$}{w = 2}}\label{sec:proof-w2}

A weakly connected directed graph is called \emph{Eulerian} if it contains an Eulerian circuit: a trail which starts and ends at the same graph vertex; i.e., a graph cycle which uses each arc exactly once. By Euler's famous theorem, we know that a weakly connected directed graph is Eulerian if and only if every graph vertex has equal in-degree and out-degree. In the special case when $G(V,A)$ is a directed Eulerian graph, we call the problems \FAS and \FASD, \EFAS and \EFASD, respectively. In this section, we design a reduction from the \EFASD problem to the \MMD problem for the case of $w=2$. \EFASD is known to be NP-complete \cite{DBLP:journals/corr/abs-1303-3708}.

Given any directed Eulerian graph $G = (V,A)$, we construct a string $S$ on the alphabet $\Sigma = V$ such that the minimum number of minimizers $\Mwk(S, F)$ for $w=2$ can be expressed exactly as a linear  function of the cardinality of the feedback arc set $F$. We do this by constructing a string gadget for each arc $(\ta, \tb)\in A$, for which we can determine the number of minimizers if $\ta < \tb$ and if $\tb < \ta$. The fact that $G$ is Eulerian is used to allow these gadgets to \emph{overlap}, so that every minimizer window can correspond to a single arc or vertex.

\subparagraph{Construction of $S$.} Let $v_0,\dots,v_{|A|}$ denote an Eulerian circuit on the vertices of $G$. Every consecutive vertex pair $(v_i, v_{i + 1})$, $i\in[0,|A|)$, corresponds to a single arc in $A$, and we can have no more than $|A|$ such pairs. Note however that vertices may occur multiple times in the circuit. We construct $S$ as follows:

\begin{equation}\label{eq:construction-w2}
    S = \prod_{i=0}^{|A|} v_{|A| - i}^{k + 1}.
\end{equation}

Recall that the product $\prod$ of some strings denotes their concatenation, and raising a letter to some power $x$ means repeating it $x$ times. In other words, $S$ is created by taking the Eulerian circuit in \emph{reverse}, and repeating each letter $v_i$ $k + 1$ times. We call each $v_i^{k + 1}$ a \emph{block}.
By having a block of $v_{i+1}$ followed by a block of $v_i$ in $S$, we ensure that we have \emph{one extra minimizer} when $v_{i+1}<v_i$ and several other minimizers regardless of the ordering. We reverse the arcs to ensure that this extra minimizer is added to $\Mwk(S, F)$ when $(v_{i},v_{i+1})\in F$.

\begin{lemma}\label{lem:mwk-w2}
    $\Mwk(S, F) = 1 + |A| \cdot k + |F|$ if $w=2$.
\end{lemma}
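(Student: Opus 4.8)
The plan is to count, for the string $S$ constructed in Equation~\eqref{eq:construction-w2}, exactly how many distinct positions are selected as minimizers when $w = 2$ and a given alphabet ordering (equivalently a feedback arc set $F$) is fixed. Since $w = 2$, each window $S[i \dd i + k]$ of length $k+1$ has exactly two candidate length-$k$ fragments: the one starting at position $i$ and the one starting at position $i+1$; the minimizer is the leftmost position achieving the lexicographically smaller of these two. So I would first note that a position $i$ is a minimizer if and only if the window starting at $i$ or the window starting at $i-1$ picks it; concretely, position $i$ is a minimizer iff $S[i \dd i+k-1] \le S[i-1 \dd i+k-2]$ (window at $i-1$ chooses the right candidate) or $S[i \dd i+k-1] < S[i+1 \dd i+k]$ (window at $i$ chooses the left candidate). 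The whole argument reduces to classifying positions according to which block they fall in and whether they lie at or near a block boundary.

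Next I would partition the positions of $S$ into the interiors of blocks and the boundaries between consecutive blocks. Each block is $v_j^{k+1}$ for some vertex $v_j$; there are $|A|+1$ blocks, so $S$ has length $(|A|+1)(k+1)$ and there are $|A|$ internal boundaries. For a position strictly inside a block (not among the last $k$ positions before a boundary, and not the very first few positions), both length-$k$ candidate fragments are equal to $v_j^k$, so by the leftmost tie-break rule exactly the positions giving fragment $v_j^k$ get chosen — I would count these carefully. The crux is the behaviour at each boundary between block $v_{j+1}^{k+1}$ and the following block $v_j^{k+1}$ (recall the circuit is reversed, so the block of $v_{i+1}$ precedes that of $v_i$): here the window straddling the boundary compares a fragment starting with some $v_{j+1}$'s followed by $v_j$'s against one with fewer $v_{j+1}$'s, and the outcome — hence whether an extra minimizer appears — depends on whether $v_{j+1} < v_j$ or $v_j < v_{j+1}$ in the ordering. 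I expect to show that each boundary contributes a fixed baseline count of minimizers plus exactly one additional minimizer precisely when $v_{j+1} < v_j$, i.e. precisely when the arc $(v_j, v_{j+1})$ of the original (un-reversed) circuit is a back-arc, i.e. belongs to $F$. Summing the per-block interior counts and the per-boundary counts, together with a small correction for the first and last blocks of $S$ (which have only one adjacent boundary), should yield $1 + |A|\cdot k + |F|$: the $|A|\cdot k$ term collecting $k$ minimizers per arc/block-boundary region, the $+1$ accounting for the leftmost block's boundary-free end, and the $+|F|$ from the back-arcs.

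The main obstacle I anticipate is the bookkeeping at the boundaries and, in particular, verifying that the gadgets genuinely overlap cleanly — that no window spans three blocks and that every window is "charged" to exactly one arc or vertex, which is exactly where the Eulerian property (each arc used once, so consecutive circuit vertices are distinct and the blocks have uniform length $k+1$) is doing the work. I would need to check the edge cases $k = 1$ (blocks of length $2$, windows of length $2$) separately or confirm the generic count still applies, and to be careful that when $v_{j+1} < v_j$ the "extra" minimizer is a genuinely new position not already counted among the interior minimizers of either adjacent block. Once the boundary analysis is pinned down, the final tally is a routine sum.
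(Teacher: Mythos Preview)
Your plan is correct and follows essentially the same block-by-block (equivalently, arc-by-arc) count as the paper: each block $v_{i+1}^{k+1}$ contributes $k$ unconditional minimizers (its first position by tie-break, and positions $3$ through $k{+}1$ because each is both a left and a right candidate of some straddling window), the second position of the block is an extra minimizer precisely when $v_{i+1} < v_i$, i.e.\ when $(v_i,v_{i+1})\in F$, and the trailing block $v_0^{k+1}$ supplies the final~$+1$. One small slip to correct before you carry out the bookkeeping: your two inequalities have the tie-breaking reversed --- the window starting at $i$ picks $i$ when $S[i\dd i+k-1]\le S[i+1\dd i+k]$ (non-strict, leftmost wins ties), while the window starting at $i-1$ picks $i$ only when $S[i\dd i+k-1] < S[i-1\dd i+k-2]$ (strict); this matters inside a block, where all length-$k$ fragments are equal.
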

\begin{proof}
    Since $w = 2$, the windows for which minimizers are determined have length $w + k - 1 = k + 1$. By construction, for any arc $(v_i, v_{i + 1})$, there are $k + 1$ length-$(k + 1)$ windows starting at some position in each block corresponding to $v_{i + 1}$. We consider each of these windows.
    
    The first window, starting at the first $v_{i + 1}$, is $v_{i + 1}^{k + 1}$. Since both candidates in this window are equal (to $v_{i + 1}^k$), the leftmost fragment will be the minimizer by tie-breaking regardless of whether or not $(v_i, v_{i + 1}) \in F$.
    
    The other $k$ windows each have the form $v_{i + 1}^mv_i^{k + 1 - m}$ for $1 \le m \le k$. The first candidate of each such window starts with $m$ $v_{i + 1}$'s followed by $v_i$'s, whereas the second starts with $m - 1$ $v_{i + 1}$'s followed by $v_i$'s. Therefore, if $v_{i + 1} < v_i$ (i.e., $(v_i, v_{i + 1}) \in F$), the former candidate will be the window's minimizer whereas if $v_i < v_{i + 1}$, the second candidate will be the minimizer.
    We have $k$ of these windows, all with consecutive starting positions. The minimizer candidates starting at the third through the $(k + 1)$th $v_{i + 1}$ of the block are each \emph{both} the first candidate of some such window and the second candidate of some other such window, and will therefore be a minimizer regardless of the alphabet ordering. This amounts to $k - 1$ minimizers for each arc. Out of these $k$ windows, only the first candidate of the first window and the second candidate of the last window remain. The first candidate of the first window, which starts at the second $v_{i + 1}$ in the block, is a minimizer if and only if $v_{i + 1} < v_i$. The second candidate of the last block is $v_i^{k}$, of which we already know it is a minimizer regardless of the alphabet ordering as it is the start of a new block. Thus, we have one more minimizer if $v_{i + 1} < v_i$ and one more regardless of the ordering.

    Finally, we have a block $v_0^{k + 1}$ at the end of $S$. Like any other block, the first $v_0$ is a minimizer by tie-breaking. The second position in this block, indicating the start of the final length-$k$ fragment of $S$, is not a minimizer because it is not part of any other window.

    In summary, we have:
    \begin{itemize}
        \item for every arc $(v_i, v_{i + 1})\in A$:
        \begin{itemize}
            \item $k$ minimizers regardless of alphabet ordering;
            \item one minimizer if $(v_i, v_{i + 1}) \in F$;
        \end{itemize}
        \item one minimizer at the start of the final block, regardless of alphabet ordering.
    \end{itemize}

All of this adds up to $1 + |A| \cdot k + |F|$ minimizers in total, completing the proof.
\end{proof}

\begin{example}
Let $G(V,A)$ be an Eulerian directed graph with $V=\{\texttt{a},\texttt{b},\texttt{c}\}$ and $A=\{(\texttt{a},\texttt{b}),(\texttt{b},\texttt{c}),(\texttt{c},\texttt{a})\}$. Further let $\texttt{a}<\texttt{b}<\texttt{c}$, $F=\{(\texttt{c},\texttt{a})\}$ , $w=2$, $k=5$, and $w+k-1=6$. For the circuit $\texttt{a},\texttt{b},\texttt{c},\texttt{a}$ ($\texttt{a},\texttt{c},\texttt{b},\texttt{a}$ in reverse) we have $S=\texttt{\textcolor{red}{aaaaaa}\textcolor{red}{c}c\textcolor{red}{cccc}\textcolor{red}{b}b\textcolor{red}{bbbba}aaaaa}$
with $1 + |A| \cdot k + |F|=17$ minimizers colored red.
\end{example}

\begin{proof}[Proof of Theorem~\ref{the:main-w2}]
    We show that the $\MMD$ problem is NP-complete if $w = 2$ by a reduction from \EFASD. Let $(G, \ell')$ be an instance of \EFASD. First, we determine an Eulerian circuit $v_0,\dots,v_{|A|}$ of $G$, which is well-known to be possible in polynomial time. Using this Eulerian circuit, we construct $S$ in accordance with Equation~\ref{eq:construction-w2}. By Lemma~\ref{lem:mwk-w2}, we know that $\Mwk(S, F) = 1 + |A| \cdot k + |F|$. Let $\ell = 1 + |A| \cdot k + \ell'$. Because the function $\Mwk(S, F)$ is linear in $|F|$, we have that $\Mwk(S, F) \le \ell$ if and only if $|F| \le \ell'$. In other words, $G$ contains a feedback arc set of size at most $\ell'$ if and only if there exists some alphabet ordering such that $S$ has at most $\ell$ minimizers. The string $S$ has length $(k + 1) \cdot (|A| + 1)$, which is polynomial in the size of $G$ and can therefore be constructed in polynomial time. 
\end{proof}

\section{Considering the Orderings on \texorpdfstring{$\Sigma^k$}{}}\label{sec:MMk}

Most of the existing approaches for minimizing the minimizers samples consider the space of all orderings on $\Sigma^k$ instead of the ones on $\Sigma$.
Such an approach has the advantage of an easy and efficient construction of the sample by using a rolling hash function $h:\Sigma^k\rightarrow \mathbb{N}$, such as the popular Karp-Rabin fingerprints~\cite{DBLP:journals/ibmrd/KarpR87};
this results in a random ordering on $\Sigma^k$ that usually performs well in practice~\cite{DBLP:journals/jcb/ZhengMK23}.
Let us denote by \MMk the version of \MM that seeks to minimize $|\Mwk(S)|$ by choosing a best ordering on $\Sigma^k$ (instead of a best ordering on $\Sigma$).
It is easy to see that any algorithm solving \MM solves also \MMk with a polynomial number of extra steps:
We use an arbitrary ranking function $\textsf{rank}$ from \emph{the set} of length-$k$ substrings of $S$ to $[1,n-k+1]$. We construct the string $S'$ such that $S'[i]=\textsf{rank}(S[i\dd i+k-1])$, for each $i\in[1,n-k+1]$. 
Let $\Sigma'$ be the set of all letters in $S'$. 
It should be clear that $|\Sigma'| \leq n$ because $S$ has no more than $n$ substrings of length $k$. 
We then solve the \MM problem with input $\Sigma:=\Sigma'$, $S:=S'$, $w:=w$, and $k:=1$.
It is then easy to verify that an optimal solution to \MM for this instance implies an optimal solution to \MMk for the original instance.
We thus conclude that \MM is at least as hard as \MMk; they are clearly equivalent for $k=1$.

\begin{example}
Let $S = \texttt{aacaaacgcta}$, $w = 3$, and $k=3$.
We construct the string $S'=\texttt{235124687}$ over $\Sigma'=[\texttt{1},\texttt{8}]$ and solve
\MM with $w = 3$, $k=1$, and $\Sigma=\Sigma'$.
Assuming $\texttt{1}<\texttt{3}<\texttt{5}<\texttt{6}<\texttt{2}<\texttt{4}<\texttt{7}<\texttt{8}$, $\mathcal{M}_{3,1}(S')=\mathcal{M}_{3,3}(S)=\{2,4,7\}$. The minimizers positions are colored red: $S'=\texttt{2\textcolor{red}{3}5\textcolor{red}{1}24\textcolor{red}{6}87}$. This is one of many best orderings. 
\end{example}

Another advantage of \MMk is that a best ordering on $\Sigma^k$ is at least as good as a best ordering on $\Sigma$ at minimizing the resulting sample. Indeed this is because every ordering on $\Sigma$ implies an ordering on
$\Sigma^k$ but not the reverse. 

Unfortunately, \MMk comes with a major disadvantage. Suppose we had an algorithm solving \MMk (either exactly or with a good approximation ratio or heuristically) and applied it to a string $S$ of length $n$, with parameters $w$ and $k$. Now, in order to compare a query string $Q$ to $S$, the first step would be to compute the minimizers of $Q$, but to ensure local consistency (Property 2), we would need access to the ordering output by the hypothetical algorithm. The size of the ordering is $\cO(\min(|\Sigma|^k,n))$ and storing this defeats the purpose of creating a sketch for $S$.
This is when it might be more appropriate to use \MM instead.

Since \MM is NP-hard for $w \ge 2$ and $k = 1$, \MMONE is NP-hard for $w \ge 2$; hence the following corollary of \cref{cor:main}.
\begin{corollary}\label{coro:k}
    \MMONE is NP-hard if $w \ge 2$.
    \label{cor:MMk}
\end{corollary}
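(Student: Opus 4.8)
The plan is to derive \cref{cor:MMk} as an immediate consequence of \cref{cor:main} together with the reduction from \MMk to \MM sketched earlier in this section. First I would recall that \cref{cor:main} states \MMD is NP-complete for \emph{any} $w$ and $k$; in particular, fixing $k = 1$, \MMD is NP-complete for every $w \ge 2$. Since for $k = 1$ there is nothing to distinguish orderings on $\Sigma^k = \Sigma^1 = \Sigma$ from orderings on $\Sigma$, the problem \MMONE coincides with \MM restricted to $k = 1$. Hence the NP-hardness of \MM for $w \ge 2$ and $k = 1$ transfers verbatim to \MMONE for $w \ge 2$.

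The one point that deserves a sentence of justification is the claim, made earlier in the text, that \MM and \MMk are equivalent when $k = 1$. This is because when $k=1$ the set of length-$k$ substrings of $S$ is exactly the set of single letters occurring in $S$, so an ordering on $\Sigma^1$ is literally an ordering on (a subset of) $\Sigma$, and letters not occurring in $S$ can be placed arbitrarily without affecting $|\Mwk(S)|$. Thus choosing a best ordering on $\Sigma^k$ and choosing a best ordering on $\Sigma$ are the same optimization problem, and the decision versions have identical YES-instances.

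I expect no real obstacle here: the corollary is a one-line specialization of \cref{cor:main}, and the only care needed is to state explicitly that the instances produced by the reductions of \cref{the:main} and \cref{the:main-w2} already use $k = 1$ as a valid choice (indeed both reductions are stated for all $k \ge 1$), so one does not even need the \MMk-to-\MM reduction machinery — it suffices to invoke \cref{cor:main} at $k=1$. The proof I would write is therefore simply: by \cref{cor:main}, \MMD is NP-complete when $k = 1$ and $w \ge 2$; for $k = 1$, \MMD and the decision version of \MMONE are the same problem; therefore \MMONE is NP-hard for $w \ge 2$.

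\begin{proof}[Proof of \cref{cor:MMk}]
By \cref{cor:main}, \MMD is NP-complete for any $w$ and $k$; in particular, it is NP-complete for every $w \ge 2$ when $k = 1$. For $k = 1$, the set of length-$k$ substrings of $S$ is exactly the set of letters of $\Sigma$ occurring in $S$, so an ordering on $\Sigma^k$ is nothing but an ordering on $\Sigma$ (with letters of $\Sigma$ absent from $S$ placed arbitrarily, which does not affect $|\Mwk(S)|$). Hence, for $k = 1$, the optimization problems \MM and \MMk coincide, and so do their decision versions. It follows that the decision version of \MMONE is NP-complete for $w \ge 2$, and therefore \MMONE is NP-hard for $w \ge 2$.
\end{proof}
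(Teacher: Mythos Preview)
Your proposal is correct and follows the same approach as the paper: invoke \cref{cor:main} at $k=1$ and use the observation (stated just before the corollary) that \MM and \MMk coincide when $k=1$, so the NP-hardness of \MM for $w\ge 2$, $k=1$ transfers directly to \MMONE.
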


\section{Open Question}\label{sec:fin}

Is \MMk NP-hard for $k > 1$?

\bibliography{references}
\end{document}